\newtheorem{theorem}{Theorem}
\newtheorem{proposition}{Proposition}
\numberwithin{equation}{section}
\begin{document}

\title{\bf From billiards to thermodynamics}
\author{
T. Chumley\footnote{Department of Mathematics, Washington University, Campus Box 1146, St. Louis, MO 63130},
\ \  S. Cook\footnote{Department of Mathematics and Statistics, Swarthmore College, College Ave, Swarthmore PA 19081}, \ \
R. Feres\footnotemark[1]
}
\date{\today}

\maketitle

\begin{abstract}
We explore some beginning  steps in  stochastic thermodynamics
of   billiard-like mechanical systems  by introducing extremely  simple  and explicit 
random mechanical processes  capable of exhibiting steady-state  irreversible thermodynamical behavior. In particular,
we describe a Markov chain model   of a minimalistic heat engine and numerically study its 
operation and efficiency. 
 \end{abstract}

\section{Introduction}

The main purpose of this paper is to introduce a class of simple random mechanical model systems
that may  help in  shedding  light on the mechanisms  whereby   steady state, out of equilibrium thermodynamical behavior
can emerge  in random dynamics.  In the spirit of classical statistical mechanics, our  random systems 
  arise as a certain  form of course-graining of Hamiltonian mechanical models; among these models
 we look to the simplest ones,  in which all the interactions are through elastic collisions, namely 
 {\em billiard systems. }
  
We shall  interpret the term ``emerging thermodynamical behavior'' rather concretely 
by  considering the    problem of obtaining  a  random mechanical system
that
can perform at steady-state (stationary) mode as a {\em heat engine},
 defined in an explicit fashion using few degrees of freedom.

The  main motivation lies in the belief that 
   a rich collection  of    model  systems that are amenable 
to  detailed   numerical and  analytical exploration
is 
essential   to guiding  the  development of   a  stochastic theory of non-equilibrium thermodynamics.  See
\cite{qian} for the mathematical outlines of such a theory. For  a more applied perspective
see, e.g.,  \cite{seifert1,seifert2} and related literature on {\em stochastic thermodynamics}.
Our main contribution here is to describe   purely mechanical, billiard-like 
stochastic (Markov chain) processes, obtained
by a specific form of coarse-graining 
from deterministic billiards, 
built from a rather small  number of parts and   fully explicit  in a sense to  be
clarified later in the paper, which can exhibit (in the mean) textbook
thermodynamical behavior.  
In particular, we describe a minimalistic {\em billiard-Markov} heat engine  capable
of producing mechanical  work in a steady-state regime of operation.

Related studies in the stochastic thermodynamics literature,
 or in numerical studies of molecular motors and models of {\em Feynman's ratchet and pawl system}
(such as in \cite{zheng}), 
 typically start from a Langevin equation and the {\em a priori} existence of a  heat bath at a given temperature.
A  distinguishing feature of the present work is that  we have   an explicit Markov model of the heat bath-thermostat,
which is 
 very closely related to the deterministic system  from which it is  derived. 
Our systems are examples of  {\em random billiards},
a term that will be    expanded upon later  in the  paper. (See also \cite{scott,fz,fz2}.)

The paper is organized as follows.  In Section \ref{deterministicbilliards}, the most basic  facts 
about (deterministic) billiard dynamical systems are recalled for later use.
These are mechanical systems in which the interaction between moving parts is limited to  elastic collisions; in particular, there are no potentials or dissipative forces. 
 The section  lingers  a bit on a 
description of the natural flow invariant  measure in the phase space of the billiard system,
emphasizing   the so-called {\em cosine law} for billiard reflections.
 As a prelude to introducing our
billiard thermostat later in Section \ref{billtherm}, we also show in Section \ref{deterministicbilliards}
how the Maxwell-Boltzmann distribution of velocities is obtained from the cosine law and
a simple (well-known) geometric argument concerning  the phenomenon of concentration of volumes   in high dimensions. 

Section \ref{billiardmodels}  introduces the idea of {\em random billiard systems.}
These are Markov chain systems with general (i.e., not necessarily countable) state spaces  obtained from deterministic billiards
(see Section \ref{deterministicbilliards}) by the following general method.  We select one or more dynamical variables
of a  given deterministic mechanical system and turn  them into random variables with
 fixed in time probability distributions. It is natural to choose for the latter   the asymptotic probability  distribution that  those
variables attain  in the original  deterministic system.   The resulting random dynamical system is
often not far removed, in certain ways, from the deterministic system that gave rise to it. For example,
for the main class of random billiards described  in Section \ref{billiardmodels}, the 
velocity factor of the flow-invariant measure in phase space becomes a stationary
measure for the associated random process,  suggesting that  the random and deterministic
systems have closely related ergodic theories. 

Also in  Section \ref{billiardmodels} we introduce and explore a random billiard system that 
will serve as our all-purpose heat bath-thermostat. It is indicated there (and proved in \cite{scott})
that a sequence of  collisions of a point mass with the random billiard thermostat yields  a Markov chain process (in
the state space of post-collision velocities) whose stationary distribution  is
Maxwell-Boltzmann's. 
Thus, in our systems, thermostatic
action is not imposed by fiat but modeled explicitly.  As already noted, this is a distinguishing feature of
 our models. 

All pieces will  then be  in place 
to  study heat flow between two billiard thermostats at different temperatures. 
This is done in Section \ref{heatflowsection}.  The basic mechanism of operation 
of a heat engine already becomes apparent at this point. In Subsection \ref{heatengine}
we propose a particular design for such an engine which is as simple as we could conceive.
The associated random billiard is $5$-dimensional, that is, it contains $5$ moving parts (point masses): 
one ``gas molecule,'' one ``Brownian particle,'' an additional particle that acts as an  ``escape valve''
to the gas molecule, and two masses for the two thermostats at different temperatures.  The whole contraption 
is essentially one-dimensional in physical space. 
We briefly explore the engine's operation by numerical simulation and compute the mean velocities of 
rotation and  the engine's  
(rather modest, but positive) efficiency for different values of  a force load.

 \section{Deterministic billiards}\label{deterministicbilliards}
A brief overview is given here   of the most basic properties of deterministic
 billiards  needed for our discussion of random billiards in the next section. Most
 importantly, is a description of the billiard flow-invariant volume in phase space and
 the so-called {\em cosine law} for billiard reflection.
 \subsection{Basic facts}
Billiard systems, broadly conceived,  are Hamiltonian systems on  manifolds with boundary,
  the boundary points representing {\em collision configurations}. Most commonly, the configuration manifold
  is a    region in the Euclidean plane having piecewise smooth boundary, although higher dimensional 
  systems are widely studied and will be encountered throughout  this paper. 
  Higher dimensional billiards typically describe mechanical systems consisting of several
   rigid constituent masses interacting only through collisions. 
  The configuration manifold  is endowed with
the Riemannian metric defined by the kinetic energy bilinear form. In particular, the (linear) collision
map at boundary points   of the configuration manifold is a linear isometry under the assumption of
energy conservation.  
The collision  map is often taken to be the standard Euclidean {\em reflection}, that is,  a map that  fixes all the vectors tangent to the
boundary while sending a   vector perpendicular to the boundary to its negative. In this paper
 the Riemannian metric on configuration space will always have constant coefficients (associated to masses
of the constituent rigid parts of the system)
and so will be an Euclidean metric.

\vspace{0.1in}
\begin{figure}[htbp]
\begin{center}
\includegraphics[width=2.0in]{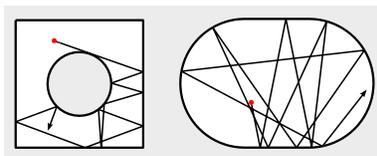}\ \ 
\caption{\small A version of Sinai's Billiard on the left,  and the Bunimovich stadium on the right. These are
two examples of {\em ergodic} billiard systems.}
\label{shapes}
\end{center}
\end{figure}

Figure 
  \ref{shapes} shows two famous  examples of the basic kind of  billiard system. In each case, the {\em billiard table} is
  a planar region   whose boundary consists of   piecewise smooth curves; the {\em billiard particle}   undergoes uniform rectilinear motion in the interior of the region, bouncing
  off specularly after hitting  the boundary.

 In general, let $M$ denote the billiard's configuration manifold. This is 
 the planar regions in the $2$-dimensional examples of Figure \ref{shapes}.  The {\em phase space} is the bundle of tangent vectors $TM$
  on which  one defines the {\em flow map} $\varphi_t$. The flow map assigns  to each time $t$ and tangent vector $(q,v)\in TM$
the  state (i.e., the position and velocity) $\varphi_t(q,v)$ of the billiard trajectory at time $t$  having initial   conditions $(q,v)$ at time $0$.  

 It  will be assumed here that the billiard particle is not subject to
a potential function or any   form of interaction other than  elastic collision.  For a more general
perspective see  \cite{scott}. Thus the speed of billiard trajectories (given in terms
of the mechanically determined Riemannian metric) is a constant of motion,
usually arbitrarily set to $1$,   and 
 the flow map $\varphi_t$ is often restricted to the submanifold of unit vectors in $TM$.
The precise definition of the billiard flow contains 
some important fine print,   dealing with the issue of
singular trajectories;  for example, those trajectories that end at corners or graze the boundary of $M$.
For the omitted details (in dimension $2$)
 see  \cite{chernov}.

A fact of special significance is that the billiard flow map leaves invariant a canonical volume form on phase space. There is also an associated invariant volume form on 
the space of unit vectors on the boundary of $M$. The existence of these invariant volumes is fundamental
 for the ergodic theory of billiard systems and for the probability theory we wish to
employ later, so we take a moment to describe them in detail.

\vspace{.1in}
\begin{figure}[htbp]
\begin{center}
\includegraphics[width=2.0in]{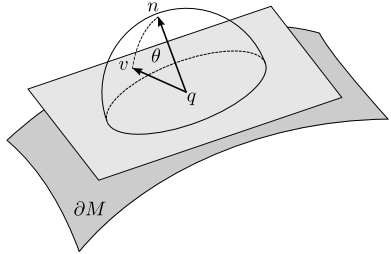}\ \ 
\caption{\small A piece of the boundary of a billiard region, showing the unit hemisphere at
a point $q$. The unit normal vector $n$ points to the interior of the $d$-dimensional manifold  $M$ and $v$ is a unit tangent vector
to $M$ at $q$ forming an angle $\theta$ with $n$.  If $d\omega$ denotes the $(d-1)$-dimensional volume on
the unit hemisphere  at the boundary point $q$ of  $M$, 
then $d\nu=\cos\theta\, d\omega$ is the  factor of the invariant volume 
accounting for velocities at $q$.}
\label{tangent}
\end{center}
\end{figure}

Let $d$ be the dimension of $M$
and $S^+$  the subset of $TM$ consisting of
unit vectors at boundary points of $M$ pointing towards the interior of $M$.  Then $S^+$ is the disjoint union
of hemispheres $S^+_q$ defined at each $q\in M$. The unit normal vector $n_q$ is contained in $S_q^+$;
we denote by $\theta$ the angle between a given $v\in S^+_q$ and $n_q$ and by $d\omega(v)$ the
$(d-1)$-dimensional volume element at $v$ over $S^+_q$. Also let $dV(q)$ denote the
volume element at $q$ on the boundary of $M$ (associated to the induced Riemannian metric).
The billiard flow $\varphi_t$ induces a map $T$ on $S^+$ as follows:  for each $v\in S_q^+$ write 
$(q(t), v(t))=\varphi_t(q,v)$, where $t$ is  the moment of next collision with the boundary; then 
$T(q, v):=(q(t), \overline{v}(t))$, where $\overline{v}$ indicates  the reflection of $v$ back into $S^+$. 
We refer to $T$ as the {\em billiard map}. 
The transformation $T$ is said to {\em preserve}, or {\em leave invariant} a measure $\nu$ on $S^+$ if (writing
$u=(q,v)$)
  $$\int_{S^+} f(u)\, d\nu(u)=\int_{S^+} f(T(u))\, d\nu(u)$$
for every  integrable function $f$.  The next proposition is well-known.
\begin{proposition}
$T$ leaves invariant  on $S^+$ the measure  element 
\begin{equation} d\nu(q,v):=\cos\theta\, dV(q)\, d\omega(v).\end{equation}
\end{proposition}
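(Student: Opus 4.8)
The plan is to realize $\nu$ as the \emph{flux} measure that the flow-invariant Liouville volume on the unit tangent bundle induces on the codimension-one ``surface of section'' $S^+$, and then invoke the standard principle that the first-return map of a measure-preserving flow preserves the flux through any transverse section.

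The ingredient that drives everything is that the billiard flow $\varphi_t$ on the unit tangent bundle $SM\subset TM$ (a $(2d-1)$-dimensional manifold) preserves the Liouville volume
\begin{equation*}
d\mu(q,v)=dV_M(q)\,d\omega_q(v),
\end{equation*}
where $dV_M$ is the $d$-dimensional Riemannian volume on $M$ and $d\omega_q$ the $(d-1)$-dimensional spherical volume on the unit sphere of $T_qM$. In the interior this is Liouville's theorem for the unit-speed geodesic (here rectilinear) flow; at the boundary the collision map is a linear isometry fixing the base point and so preserves $d\omega_q$, hence $\mu$ is preserved across collisions as well. (The measure-zero singular set is handled as in \cite{chernov}.) Note that $T$ is exactly the first-return map of $\varphi_t$ to $S^+$: the flow re-enters $S^+$ precisely after each reflection, the incoming velocity having already been reflected inward.

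Now the structural step. Let $X$ be the generating vector field of $\varphi_t$. Since the configuration metric is flat, interior orbits are straight lines at unit speed, so $X(q,v)=(v,0)$ --- the velocity component vanishes. By Cartan's formula $d(\iota_X\mu)=\mathcal{L}_X\mu-\iota_X\,d\mu=0$, since $\mathcal{L}_X\mu=0$ and $d\mu=0$ ($\mu$ being of top degree on $SM$); thus $\iota_X\mu$ is a closed $(2d-2)$-form, and by the suspension/surface-of-section correspondence its pullback to $S^+$ is a measure preserved by the return map $T$. It remains to identify $\iota_X\mu|_{S^+}$. Choose coordinates near a boundary point in which $dV_M=dr\wedge dV(q)$ along $\partial M$, with $r$ the signed distance along the inward normal geodesic, so that $S^+$ lies over $\{r=0\}$. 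Because $X$ has no velocity component we get $\iota_X\mu=(\iota_v dV_M)\wedge d\omega_q$, and since $v$ has $dr$-component $\langle v,n_q\rangle=\cos\theta$, restriction to $\{r=0\}$ annihilates the leftover $dr$-term and leaves $\iota_X\mu|_{S^+}=\cos\theta\,dV(q)\,d\omega(v)=d\nu$. Hence $T$ preserves $\nu$.

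The step I expect to be the genuine obstacle is not the differential-form bookkeeping, which is routine once orientation and normalization conventions are fixed, but making the surface-of-section picture legitimate almost everywhere: one must know that orbits of infinite flight time and orbits meeting corners or grazing $\partial M$ form a $\nu$-null set, so that $T$ truly coincides $\nu$-a.e.\ with the Poincar\'e return map of $\varphi_t$. This is precisely where one leans on the billiard theory recalled from \cite{chernov}. An alternative that sidesteps the flow entirely is to compute the Jacobian of $T$ directly, factoring $T$ as free flight on $TM$ (measure-preserving) followed by the boundary isometry, and checking that the $\cos\theta$ weights attached to the two copies of $S^+$ absorb exactly the change-of-variables factor.
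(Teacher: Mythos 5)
Your argument is correct, but it cannot be compared line-by-line with the paper's, because the paper offers no proof at all: the proposition is stated as well known and the reader is referred to \cite{scott} for a proof of a more general statement (allowing potentials and non-flat metrics). What you give is the standard flux argument, and it is the one found (for $d=2$) in \cite{chernov}: the Liouville volume $dV_M\, d\omega$ is preserved by free flight and by the fiberwise isometric reflection, $S^+$ is a cross-section for the flow-with-reflections, and the induced return-map-invariant measure is the contraction $\iota_X\mu$ restricted to the section, which your local computation correctly identifies as $\cos\theta\, dV(q)\, d\omega(v)$ because the $dr$-component of $v$ is $\langle v,n_q\rangle=\cos\theta$ and $dr$ annihilates vectors tangent to $S^+$. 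Two small points are worth making explicit if you write this up. First, since $S^+$ consists of \emph{post}-collision vectors, the cleanest bookkeeping is to run the flux argument from $S^+$ to the set $S^-$ of pre-collision (outward-pointing) vectors at the next impact, obtaining $|\cos\theta|\,dV\,d\omega$ on both ends, and then observe separately that the reflection $S^-\to S^+$ preserves $|\langle v,n\rangle|$ and $d\omega$; your phrasing folds this into ``$\mu$ is preserved across collisions,'' which is fine but slightly elides where the section is crossed. Second, you are right that the only genuinely delicate issue is that $T$ be defined $\nu$-almost everywhere (corners, tangencies, infinite flight times form a null set); flagging this and deferring to \cite{chernov} is appropriate, and is exactly the level of care the source the paper cites takes. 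Your suggested alternative of computing the Jacobian of $T$ directly is also viable and is closer in spirit to the computation carried out in \cite{scott}.
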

For a proof (of a   more general expression) under much more general conditions   that allow
for potentials and non-flat  Riemannian metrics see \cite{scott}.

The existence of this invariant measure on $S^+$  is the starting point of the ergodic theory of billiard systems. 
We always assume that the measure is finite and typically rescale  it so that     $\nu(S^+)=1$; in 
this case it is natural to interpret $\nu$ as a probability measure.
A billiard system is said to be {\em ergodic} if $S^+$ cannot be decomposed as a disjoint union
of two measurable subsets, both  invariant under $T$ and  having positive measure relative to 
 $\nu$.  Ergodicity can also be expressed in terms of the equality of time and space means:
\begin{equation}\label{ergodicequality}\lim_{N\rightarrow \infty} \frac1N \sum_{i=0}^{N-1} f(T^i(q,v))=\int_{S^+}f(q,v)\, d\nu(q,v),\end{equation}
where $f$ is any integrable function on $S^+$. (See \cite{katok} for a general reference for  ergodic theory as
a chapter in the mathematical  theory of dynamical systems.)
The existence of the limit, and the equality in \ref{ergodicequality}  under the ergodicity  assumption,  is the content of the celebrated ergodic theorem of Birkhoff.  Below, we refer to the identity itself as
the {\em ergodic theorem}.

\vspace{0.1in}
 \begin{figure}[htbp]
\begin{center}
\includegraphics[width=2.5in]{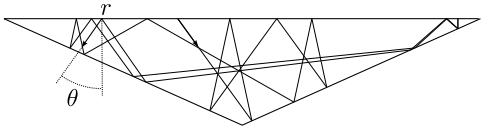}\ \ 
\caption{\small  With probability $1$, the set of return points to a piece of the boundary of an
ergodic billiard satisfies the cosine law: the post-collision angles $\theta\in [-\pi/2,\pi/2]$ have
the distribution $d\mu(\theta):=\frac12 \cos\theta \, d\theta$. The set of positions, indicated by $r$ in the figure,
are distributed uniformly. Polygonal (and polyhedral) billiard tables, as in the figure,  will often appear below, although
it is not well understood when such billiards are ergodic. See \cite{gutkin} for further remarks.
}
\label{triangletrajectory}
\end{center}
\end{figure}

Proving that a   billiard system is  ergodic is generally a technically difficult task.
In fact, a significant part   of the general theory of dynamical systems, particularly hyperbolic (strongly chaotic) systems, has
been developed in pursuit of establishing ergodicity for such  statistical mechanical systems as   hard spheres   models of a gas. (See, e.g., \cite{bali}  or \cite{buni}, chapter 8.)

An immediate consequence of the ergodic theorem is that the long term distribution of post collision angles
of  an ergodic billiard in any dimension satisfies the  cosine law, whereas the 
distribution of collision points on the boundary of $M$ is uniform relative to the measure $dV$. 
More precisely,  let $v_1, v_2, \dots$ be the velocities immediately after collisions registered at
each moment that  a billiard trajectory returns to a  segment of the boundary of $M$ having positive measure.
 Then for almost all initial conditions the set of angles is distributed
according to 
$d\mu(v)= C \langle n,v\rangle \, d\omega(v)$, where $C$ is a normalizing constant and the angle brackets 
denote inner product and  $\langle n,v\rangle=\cos\theta$, where $\theta$ is the angle between
$v$ and $n$.

\subsection{Illustrating the cosine law with a variant of Sinai's billiard}\label{divided}
The billiard table of Figure \ref{billiardtrajectory1} represents a container divided in  two chambers by a porous solid screen
 composed of small circular scatterers. The scatterers are separated by small gaps.
  A billiard particle   represents a spherical gas molecule.
One is interested, for example, in   how a ``gas'' consisting of a large
 number of   billiard particles injected at time $t=0$ into, say, the left chamber,
 will expand to fill up the entire container.

This  billiard  table can be regarded  as an ``unfolding''
 of {\em Sinai's billiard} shown on the left of Figure \ref{shapes}, and from  this observation
it can be shown that the associated billiard flow  is ergodic.  
 Figure \ref{billiardtrajectory1}
 shows one long segment of trajectory, indicating the initial velocity vector and the image of that
 vector under the billiard flow at time $t$. 
This is an example of a {\em (semi-) dispersing billiard}, which  are 
well-studied models of chaotic dynamics (see \cite{chernov}). Trajectories are highly unstable
 in their dependence on initial conditions due to the presence of the circular scatterers.
 
 \vspace{0.1in}
 \begin{figure}[htbp]
\begin{center}
\includegraphics[width=2.0in]{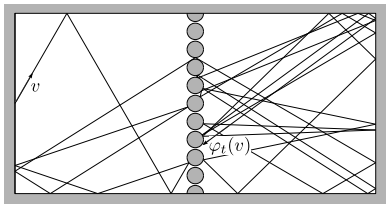}\ \ 
\caption{\small A billiard model of a container divided by a solid porous screen consisting
of small circular scatterers separated by small gaps.}
\label{billiardtrajectory1}
\end{center}
\end{figure}

 Consider 
  Figure
 \ref{cell},
 where we focus on one fundamental cell of the solid screen.
We  define the
{\em reduced phase space} of this system as the set $$S=\{0,1\}\times [0,1]\times [-\pi/2,\pi/2].$$
A state of the form $(k, r, \theta)$ gives the initial condition of a trajectory that
enters into the scattering region from the left ($k=0$) or the right ($k=1$) chamber 
at a position $r$  in the interval $[0,1]$, with velocity
$v=(-1)^k \cos\theta e_1 + \sin\theta e_2$, where $e_1$ and $e_2$ are the standard basis vectors of $\mathbb{R}^2$.
The reduced billiard map $T:S\rightarrow S$ then  gives  the end state of a trajectory that begins
  and is stopped at $S$. The billiard motion on the full table is an appropriate
composition of  $T$ 
with a similar return map on a   rectangular table.

   \begin{figure}[htbp]
\begin{center}
\includegraphics[width=1.7in]{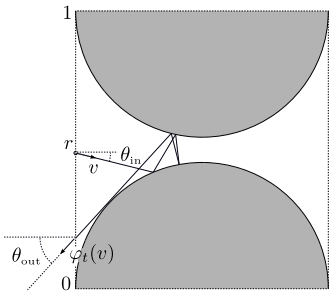}\ \ 
\caption{\small The core of the dynamics of the divided chambers billiard can is in the motion near a fundamental cell of the scattering screen. 
}
\label{cell}
\end{center}
\end{figure}

Given a long trajectory of a billiard particle, we register the values $k_1, k_2, \dots$ in $\{0,1\}$,
which is  the sequence of sides of the container the particle occupies 
at each moment it enters the scattering region;   $r_1, r_2, \dots$ in $[0,1]$,  the sequence of 
positions along the flat   boundary segments  of the fundamental cell at which  the particle enters the region;
and $\theta_1, \theta_2, \dots$ in $[-\pi/2, \pi/2]$,    the sequence of angles the particle's velocity makes
with the normal vector to those boundary segments. A remark about the first sequence will be observed
shortly;  first note that the long term distribution of the $r_i$ is uniform along the unit interval. This
follows from the above observation on  the form of the invariant measure and the ergodic theorem, and
is  observed in the  numerical experiment of Figure \ref{entry}.

\begin{figure}[htbp]
\begin{center}
\includegraphics[width=3.5in]{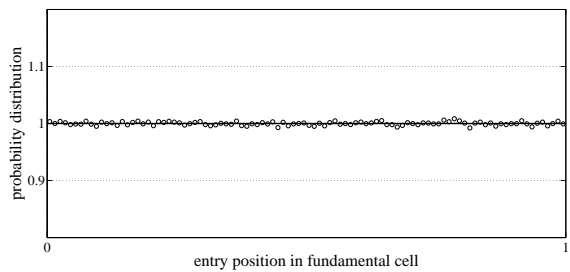}\ \ 
\caption{\small Long term distribution of entry positions   into a fundamental cell of the scattering screen. The graph was obtained by numerically simulating the billiard motion
over a period of $10^7$ entries into the scattering region. 
}
\label{entry}
\end{center}
\end{figure}

The distribution of the angles $\theta_i$ is given, as expected, by the cosine law. This is shown in Figure \ref{cosine}.
For ergodic polygonal billiards  these long term distributions of positions and angles hold, but 
convergence is much slower.

Figure \ref{experiment} shows the result of releasing a large number of  independent (i.e., that do not collide with each other) billiard particles
at  $t=0$ into the left chamber of the container. The solid line  graph gives
the fraction of particles in the right chamber as a function of time. (Time is expressed in arbitrary units of length divided by
$100$. Recall that the speed is set equal to $1$.) The other  graphs are explained later. (Section \ref{billiardmodels}.)

\begin{figure}[htbp]
\begin{center}
\includegraphics[width=3.5in]{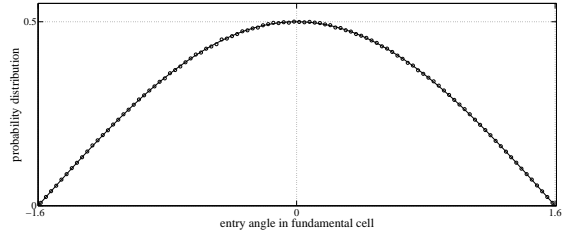}\ \ 
\caption{\small Long term distribution of the entry angles   into a fundamental cell.
(Details  as for Figure \ref{entry}.)
}
\label{cosine}
\end{center}
\end{figure}

The salient point this graph   dramatizes is the issue of time reversibility versus irreversibility. 
In the long run the fraction of particles in each chamber appears to stabilize  to $50\%$, as
our physical intuition would suggest. This behavior of the system of many particles introduces  a sense of  direction
of  passage of  time  that is not present in the time reversible nature of the billiard dynamic.
The issue of explaining irreversible behavior in the collective motion of a large number
of particles whose fundamental evolution is time reversible  is a central problem in 
non-equilibrium  statistical physics. This so-called {\em arrow of time} problem, of
deriving macroscopic irreversibility from microscopic reversibility, has
bedeviled the study of statistical mechanics  since its beginnings  in Boltzmann's fundamental work in the early
1870s.
In fact,  one early objection to Boltzmann's work is 
the so-called {\em Zermelo's paradox} \cite{steckline}, which is based on  a fundamental  observation of H. Poincar\'e known as {\em Poincar\'e's recurrence} (see
\cite{katok} for its abstract, measure theoretic form) implying that, with probability $1$ on the initial conditions,  there will be an infinite sequence of times  
when those $10^7$ particles all come together back into the initial chamber on the left.
 We refer the reader to the vast literature on
the Boltzmann equation and the $H$-theorem for more information on this topic.

\begin{figure}[htbp]
\begin{center}
\includegraphics[width=3.5in]{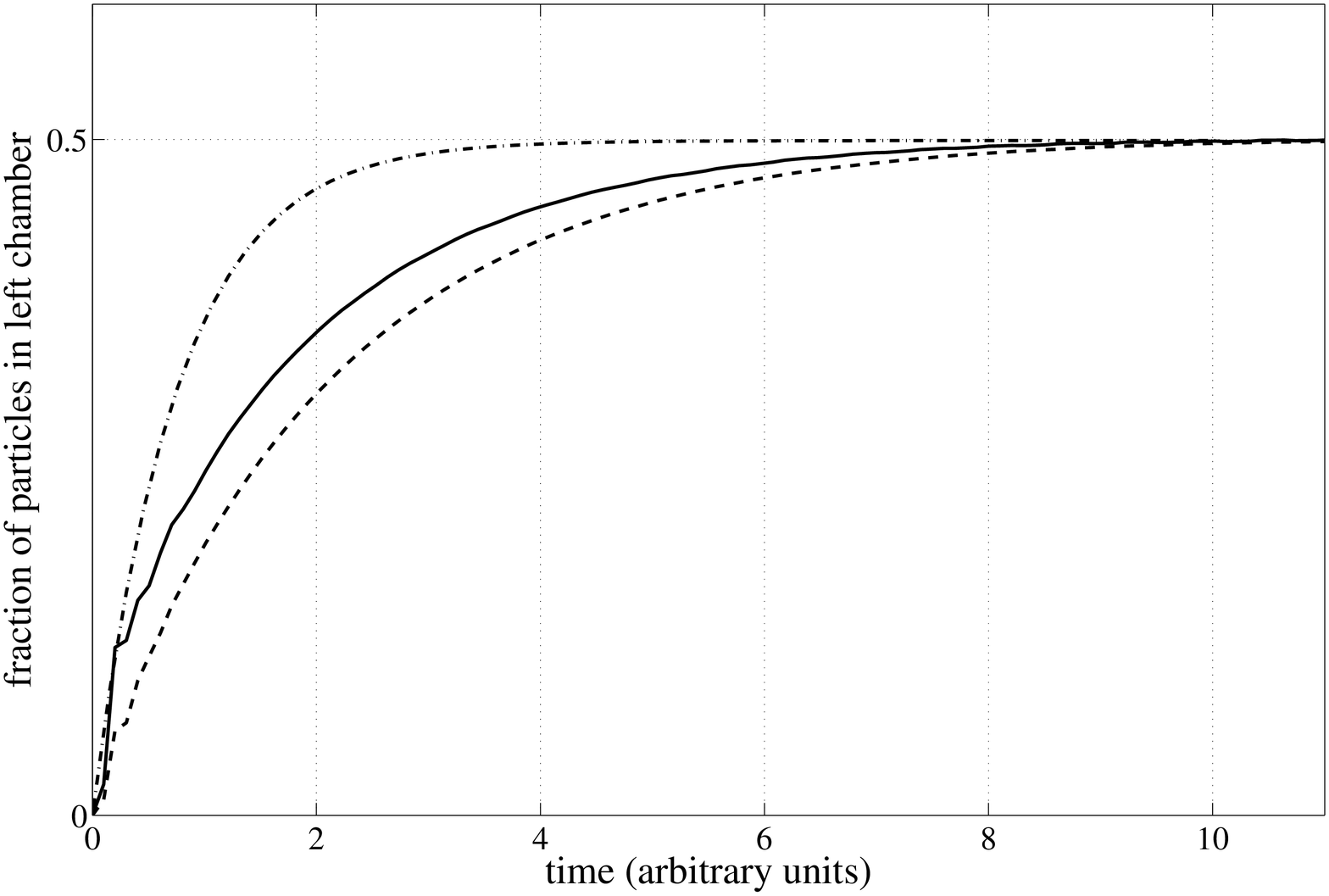}\ \ 
\caption{\small
$10^7$ (non-interacting) billiard particles are released from the middle of the left wall on  the left-hand   chamber 
uniformly over the range of angles $-\pi/4$ to $\pi/4$ relative to the positive $x$ axis. The container
is $20$ units  long by $9$  tall, and the spacing between scatterers is $1$. 
Time unit is $100$. The solid line refers  to the deterministic billiard of Figure \ref{billiardtrajectory1},
 the  dashed line (below the first) is the same   for the associated random billiard (introduced in the
 next section), and the
 dash-dot line is the corresponding plot for the two-state Markov chain for which the transition probabilities
 between chambers and the mean time of transition were obtained numerically.}
\label{experiment}
\end{center}
\end{figure}

The cosine distribution of angles has great significance in kinetic theory of gases and gas diffusion, particularly
in the so-called {\em Knudsen regime} of rarified gases, when transport properties are more strongly  affected by
collisions between gas molecules and the walls of the container than by collisions between the molecules themselves.
The appearance of the cosine term in the scattering distribution of gas molecules was studied experimentally by
M. Knudsen in the early years of kinetic theory. His experiments are described in \cite{knudsen}.
In many texts, particularly in engineering, the cosine distribution is often referred to as   {\em Knudsen's cosine law}.
See \cite{fy} for further information.

\subsection{A geometric remark about many particles  systems}\label{rescale}
The single particle billiard system is a geometric representation of a mechanical  system that 
may consist of many constituent rigid particles interacting with each other through elastic collisions. 
This  simple remark is immediately understood by considering  the two-particle, one-dimensional
billiard system shown at the top  of  Figure \ref{twoparticletriangle}.

To be fully specified, the billiard table must be given a Riemannian  metric relative to which reflections
are specular.  The triangular region of Figure \ref{twoparticletriangle} with the standard Euclidean inner product
does  not in general  define a  billiard system   since
if $m_1\neq m_2$, 
 the single particle in the triangle, whose $x$ and $y$ coordinates
give the positions of the two masses along the interval $[0,L]$, will  not reflect specularly when colliding with
the diagonal side of the triangle.  A simple way to make the collision 
specular is to absorb the mass values into the position coordinates.  Thus we define coordinates
$x_{\text{\tiny new}}=\sqrt{\frac{m_1}{m}} x,\  y_{\text{\tiny new}}=\sqrt{\frac{m_2}{m}} y$
where $m=m_1+m_2$, and note that the kinetic energy of the system, expressed in the new coordinates,
is a constant multiple of the ordinary Euclidean norm. 
Therefore, a linear transformation that conserves energy becomes an orthogonal  map.  Conservation of linear momentum
means that the component of the pre-collision velocity vector  in the direction of the slanted side of the
triangle in the new metric equals the same component for the post-collision velocity. 
Therefore, the normal component of the pre- and post-collision velocities can only be either  equal or the negative
of each other. Obviously, the latter must the case as there would be no collision otherwise.

\begin{figure}[htbp]
\begin{center}
\includegraphics[width=2.0in]{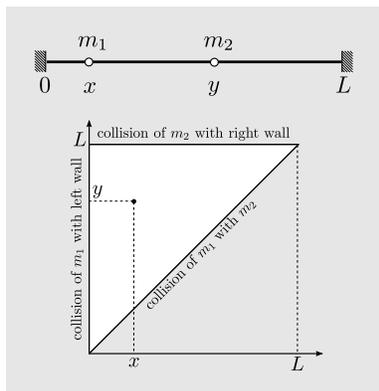}\ \ 
\caption{The billiard table of the two-particle system.}
\label{twoparticletriangle}
\end{center}
\end{figure}

These new, mass-rescaled  coordinates yield a bona fide billiard system on the plane. 
We call the single particle system in the triangular region with the new metric the {\em billiard representation} of the one-dimensional two-particle system. The idea is obviously very general and works in any dimension, for any number of masses.
In higher dimensions, say, for the collision of two solid bodies in $3$-dimensional space, 
the basic conservation laws of energy, linear and angular momentum, as well as the imposition of time-reversibility
and linearity,
do not fully specify the collision map. Further assumptions about the nature of contact, such as being slippery or rubbery,
are needed.

\subsection{Knudsen implies Maxwell-Boltzmann}
One has not  entered thermodynamics until temperature is somehow brought into the   picture, and for our needs
  this may be  done via  the Maxwell-Boltzmann distribution of velocities.  In the present  section we
illustrate with a simple example the  geometric explanation  of  how this  fundamental distribution  arises in the   context of billiard dynamics. This discussion serves  to motivate  our model of random billiard
 thermostat that will be introduced  in Section \ref{billtherm} and is not strictly necessary for  defining the billiard heat engine
 of Section \ref{heatengine}. The reader who wishes to  skip this section on first reading may do so without great loss of continuity.

 \vspace{0.1in}
\begin{figure}[htbp]
\begin{center}
\includegraphics[width=2.5in]{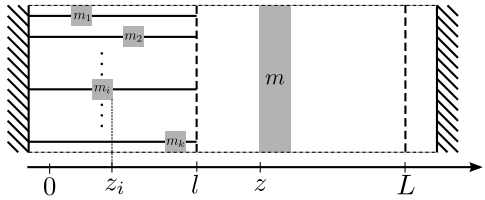}\ \ 
\caption{\small  A billiard model that helps explain the origin of the Maxwell-Boltzmann distribution of
scattered velocities. }
\label{multiple1}
\end{center}
\end{figure} 

The example is  shown in Figure \ref{multiple1}.  It  consists of   point masses $m_1, \dots, m_k, m$ that
can slide without friction  on a  line.
Masses $m_i$  are restricted to lie in the interval $[0,l]$  and
they move independently of each other. Their position coordinates are  
 indicated by $z_i$;
$m$ can move in the bigger interval $[0,L]$, with position coordinate  $z$. 
At the endpoints of  $[0,l]$
the 
  $m_i$ bounce  off   elastically. 
  Mass $m$
moves  freely past $l$ (dashed line in Figure \ref{multiple1}), and it collides elastically with the $m_i$ and with the wall at $z=L$.
We imagine the $m_i$ as  tethered to the left wall by  inelastic and massless,  but fully flexible strings of length $l$; when the strings are stretched to the  limit of their length, the masses
 bounce back
as if hitting  a solid wall  at $l$.

To make the system more symmetric without changing
it in any essential way, we regard  the wall on the left  as  a mirror and we keep track of both $z_i$ and its image $-z_i$; thus
 $z_i\in [-l,l]$ can be  negative.  (The thickness of the masses is considered negligible in this model.) In this symmetric form, the billiard representation of the system is as shown   in  Figure \ref{tent}.

Let $M=m+m_1+\dots +m_k$. 
Changing coordinates to $$x_i=\sqrt{m_i/M}\, z_i \text{ for } i=1,\dots, k,  \text{ and } x_{0}=\sqrt{m/M} z,$$ the kinetic energy form  becomes
$$ K(x,\dot{x})=({M}/2) \left(\dot{x}_0^2+\cdots +\dot{x}_{k}^2\right).$$
We may equivalently assume that $(x_1, \dots, x_k)$ defines a point on the hypercube  with coordinates
 $x_i$ in $I_i:=[-a_i/2,a_i/2]$, where $a_i=2\sqrt{m_i/M}\, l$,
 having in mind the above   comment about mirror image.
  Mass  $m$ is then constrained to move on the interval 
  $F(x_1, \dots, x_k)  \leq x_{0}\leq \sqrt{m/M} L$ where 
$$F(x_1, \dots, x_k):=\max\left\{\sqrt{m/m_1}\, |x_1|, \dots, \sqrt{m/m_{k}}\, |x_k|\right\}.$$

 \vspace{0.1in}
\begin{figure}[htbp]
\begin{center}
\includegraphics[width=1.7in]{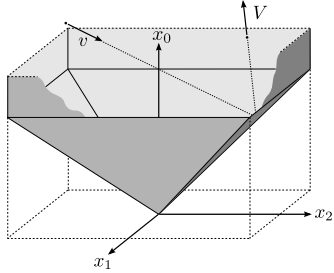}\ \ 
\caption{\small Part of the billiard representation of the system of Figure \ref{multiple1} for $k=2$
describing the interaction between $m$ and the masses $m_1$ and $m_2$.}
\label{tent}
\end{center}
\end{figure} 
Thus the configuration manifold is $$M=\left\{(x_0, x)\in I_1\times \cdots \times I_k\times \mathbb{R}:  F(x)\leq x_{0}\leq \sqrt{m/M}L\right\},$$
and collision is represented (due to energy and momentum conservation and time-reversibility),  by
specular reflection at the boundary of $M$. We now wish to follow the motion of mass $m$; geometrically,
this amounts to following  the image of billiard orbits under the orthogonal projection $\pi$, as indicated in Figure \ref{projection}.
In particular, what can be said about the distribution of values of the projection $\overline{v}(t):=\pi(v(t))$ 
of the velocity of typical billiard trajectories over long time spans? 
The following elementary proposition points to an answer.

\begin{proposition}\label{projectMB}
Let $S:=S_+^k\left(\sigma\sqrt{k+1}\right)$ denote the  hemisphere of dimension $k$ and radius $\sigma\sqrt{k+1}$, consisting of vectors
$v=(v_0, \dots, v_{k})\in \mathbb{R}^{k+1}$ such that $v_{0}>0$ and $v_0^2+\dots +v_{k}^2=(k+1)\sigma^2$. Let
$\mu_k$ be the Knudsen cosine probability measure on $S$; thus $d\mu_k(v)=C_k v_0 \, dV(v)$, 
where $V$ is the Euclidean  volume measure on $S$ and $C_k$ is a normalizing constant. 
Let $\nu_k$ be the image of $\mu_k$ under the projection map $\pi(v)=v_0$. Thus  the $\nu_k$-measure of
an interval $A\subset \mathbb{R}$ is, by definition, $\nu_k(A):=\mu_k\left(\{v: \pi(v)\in A\}\right)$.
Then, as $k$ goes to infinity, the sequence of $\nu_k$ converges (in the vague topology of probability measures)
to $\nu$ on $(0,\infty)$ such that
\begin{equation}\label{MBpc} d\nu(v_0)=\frac{v_0}{\sigma}\exp\left(-\frac12v_0^2/\sigma^2\right) dv_0.\end{equation}
We refer to $\nu$  as the {\em post-collision Maxwell-Boltzmann} probability measure in dimension $1$, with   parameter
$\sigma^2$. Similarly, let $\nu_{i,k}$ be the probability distribution of $v_i$, $i\neq 0$, given that $v$ is
distributed according to $\mu_k$. Then in the limit as $k$ approaches infinity $\nu_{i,k}$ converges
to the Gaussian  
\begin{equation}\label{MBrt} d\nu_{i}(v_i) = \frac{\exp\left(-\frac12 v_i^2/\sigma^2\right)}{\sigma\sqrt{2\pi}}\, dv_i.\end{equation}
\end{proposition}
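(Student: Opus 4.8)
The plan is to compute the marginal distributions $\nu_k$ and $\nu_{i,k}$ explicitly as one-dimensional integrals over the hemisphere $S_+^k(\sigma\sqrt{k+1})$, then extract their $k\to\infty$ asymptotics. First I would parametrize the hemisphere by the coordinate $v_0 \in (0,\sigma\sqrt{k+1})$: the slice $\{v \in S : v_0 = \text{const}\}$ is a $(k-1)$-sphere of radius $\sqrt{(k+1)\sigma^2 - v_0^2}$, and the induced volume element picks up the usual Jacobian factor from slicing a sphere, giving $dV \propto ((k+1)\sigma^2 - v_0^2)^{(k-2)/2}\,dv_0$ after integrating out the angular directions. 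Including the cosine weight $v_0$ from $d\mu_k = C_k v_0\,dV$, I get
\begin{equation}
d\nu_k(v_0) = \tilde C_k\, v_0\left((k+1)\sigma^2 - v_0^2\right)^{(k-2)/2}\, dv_0
\end{equation}
on $(0, \sigma\sqrt{k+1})$, with $\tilde C_k$ a normalizing constant that is elementary to evaluate via the substitution $u = v_0^2$.

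Next I would take the limit. Writing $((k+1)\sigma^2 - v_0^2)^{(k-2)/2} = ((k+1)\sigma^2)^{(k-2)/2}\left(1 - \frac{v_0^2}{(k+1)\sigma^2}\right)^{(k-2)/2}$ and using $\left(1 - \frac{v_0^2}{(k+1)\sigma^2}\right)^{(k-2)/2} \to \exp\!\left(-\frac{v_0^2}{2\sigma^2}\right)$ pointwise on $(0,\infty)$, the constant powers of $(k+1)\sigma^2$ get absorbed into the normalization, and matching the constant so the limit integrates to $1$ yields exactly \eqref{MBpc}. To upgrade pointwise convergence of densities to vague (indeed weak) convergence of the probability measures, I would invoke Scheffé's lemma, or alternatively note that the densities are dominated by an integrable function uniformly in $k$ for $k$ large. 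For the off-axis marginal $\nu_{i,k}$ the computation is the same by symmetry of the sphere in the coordinates $v_1,\dots,v_k$: conditioning on $v_i = \text{const}$ gives a slice that is (the upper half of) a $(k-1)$-sphere of radius $\sqrt{(k+1)\sigma^2 - v_i^2}$, with density proportional to $((k+1)\sigma^2 - v_i^2)^{(k-1)/2}$ after also integrating the cosine factor $v_0$ over that half-sphere (which contributes a constant independent of $v_i$, up to the same radius factor). The slightly different exponent $(k-1)/2$ versus $(k-2)/2$ is immaterial in the limit, and the same exponential-limit computation plus normalization produces the Gaussian \eqref{MBrt}; here $v_i$ ranges over all of $\mathbb{R}$ in the limit since $|v_i|$ can be as large as $\sigma\sqrt{k+1}$.

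The only real subtlety — and the step I would be most careful about — is the interchange of limit and normalization: one must confirm that $\tilde C_k$ behaves as the computation predicts (equivalently, that no mass escapes to infinity or concentrates at $0$ in the limit), which is precisely what Scheffé's lemma handles cleanly once pointwise convergence of the unnormalized densities is in hand. Everything else is a routine Beta-integral calculation and the elementary limit $(1 - a/k)^k \to e^{-a}$. I would also remark that this is the familiar Poincaré/Maxwell observation that the projection of the uniform measure on a high-dimensional sphere of radius $\sim\sqrt{k}$ onto a fixed coordinate is asymptotically Gaussian; the cosine weight $v_0$ is what converts the half-line marginal from a half-Gaussian into the Maxwell-Boltzmann speed density with its extra factor of $v_0$.
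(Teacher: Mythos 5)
Your proposal is correct and follows essentially the same route as the paper's proof: slice the hemisphere by the coordinate $v_0$, obtain the marginal density proportional to $v_0\left((k+1)\sigma^2-v_0^2\right)^{(k-2)/2}$, and pass to the limit via $(1-a/k)^{k}\to e^{-a}$, with the off-axis marginals handled analogously. Your appeal to Scheff\'e's lemma to justify the interchange of limit and normalization is a small point of extra rigor that the paper glosses over, but the argument is otherwise identical.
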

\begin{proof}
For convenience set  $R:=\sigma\sqrt{k+1}$ and let $S^{k-1}$ be  the unit $(k-1)$-sphere in $\mathbb{R}^k$ centered at the origin. 
Let  $\phi:[0,R]\times S^{k-1}\rightarrow S^k_+(R)$ be the polar coordinates map on the hemisphere, which is defined by
 $$\phi(v_0, v)=\left(v_0, \sqrt{R^2-v_0^2}\, v\right). $$
Let  $dV_{S^k_+(R)}$ denote the volume form on the $k$-dimensional  hemisphere of radius $R$ and $dV_{S^{k-1}}$
the volume form on the unit sphere of dimension $k-1$. A geometric exercise yields the expression of 
$dV_{S^k_+(R)}$ in the just defined  coordinates as
$$ dV_{S^k_+(R)}=R\left(R^2-v_0^2\right)^{\frac{k-2}2}\, dv_0\,  dV_{S^{k-1}}.$$
Given now any bounded  function $f(v_0)$ on the interval $[0, R]$, we obtain by a change of variables in
integration that
\begin{align*}\int_0^R f(v_0)\, d\nu_k(v_0)&=\int_{S^k_+(R)} f(\pi(v))\, d\mu_n(v)\\
&=C_n\int_0^R\int_{S^{k-1}} f(v_0)v_0 R\left(R^2-v_0^2\right)^{\frac{k-2}{2}}\, dv_0 \, dV_{S^{k-1}}.\end{align*}
Integrating over the unit $(k-1)$-sphere in the last integral gives, for a new constant  $D_k$,
$$\int_0^R f(v_0)\, d\nu_k(v_0))=D_k\int_0^R f(v_0)v_0 R\left(R^2-v_0^2\right)^{\frac{k-2}{2}}\, dv_0.$$
Reverting back to $R=\sigma\sqrt{k+1}$ and using that $(1+a/m)^m$ converges to $e^a$ as $m$ tends to infinity, 
finally gives (for yet another constant $C$ independent of $f$)
$$\lim_{n\rightarrow \infty} \int_0^{\sigma\sqrt{k+1}} f(v_0)\, d\nu_n(v_0)=C\int_0^\infty f(v_0) v_0 \exp\left({-\frac12 v_0^2/\sigma^2}\right)  dv_0.$$ As $f$ is arbitrary  we conclude that
$$d\nu(v_0)=Cv_0 \exp\left({-\frac12 v_0^2/\sigma^2}\right)  dv_0. $$
The constant $C$ is easily found to be $1/\sigma$ by normalization. 
The claim for the other components of $v$ is similarly demonstrated.
\end{proof}

Proposition \ref{projectMB} is a manifestation of the well-known connection between probability theory (and statistical physics)
and geometry in high dimensions. An especially intriguing exposition of this connection under the
heading of {\em concentration of measures} may be found in \cite{gromov}, chapter $3\frac12$.

\vspace{0.1in}
\begin{figure}[htbp]
\begin{center}
\includegraphics[width=2.0in]{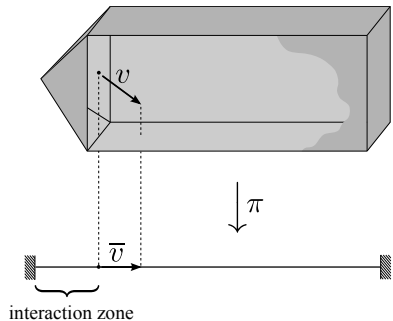}\ \ 
\caption{\small  Orthogonal projection from the multidimensional billiard to the one-dimensional reduction that tracks
the motion of the single gas molecule. Between the leftmost  dashed-line segment and right-hand end of the
interval, the projected   motion is uniform; collision with the right-hand wall is
ordinary one-dimensional billiard reflection. }
\label{projection}
\end{center}
\end{figure}

The appearance of the Maxwell-Boltzmann (MB) distribution in our billiard model can now be explained as follows. 
Observing  the velocity of the mass $m$ amounts to taking the projection $\pi$ of the velocity of the billiard trajectory
as in Figure \ref{projection}; if the billiard system is ergodic (this depends on the ratios of masses, although
as far as we know there is no general criterion  of  ergodicity for polyhedral billiard tables, even in dimension $2$),
then as indicated earlier the long term distribution of velocities $v_1, v_2, \dots$ at the moments $t_1, t_2, \dots$
when the billiard particle emerges from the interaction zone on the left-hand side of the polyhedral table follows
a cosine distribution. The proposition now implies that the projections $\pi(v_1), \pi(v_2), \dots$ should then
follow the approximate MB distribution for finite $n$. The approximation becomes better
as the number of masses near the wall of the system of Figure \ref{multiple1} increases and the total energy
increases proportionally.

\vspace{.1in}
\begin{figure}[htbp]
\begin{center}
\includegraphics[width=3.0in]{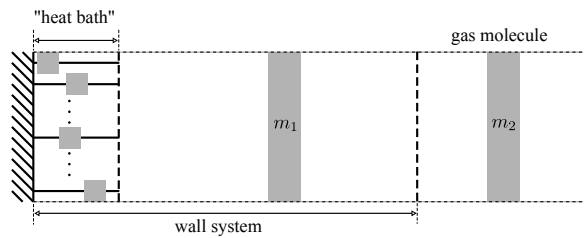}\ \ 
\caption{\small A billiard model of wall with  thermostatic properties. }
\label{multiple_second}
\end{center}
\end{figure}

Reverting to the initial velocity variables (i.e., before we absorbed the masses to form the above $v_i$)  
and indicating by $v$ the velocity of $m$, the post-collision MB distribution
can be written as
\begin{equation}\label{MB} \rho_{\text{\tiny MB}}(v)= \beta m v \exp\left(-\beta \frac{m v^2}2\right)\end{equation}
where $\beta$ is a parameter with units of energy. Later on, after we introduce our
random billiard model for a thermostat, we will remark on how equality of $\beta$ for
two parts of a system is a necessary condition for stationarity, so we recover the 
idea of thermal equilibrium. In statistical physics ones writes  $\beta = 1/kT$, where $k$ is the
so-called  {\em  Boltzmann constant} and $T$ is absolute temperature.

Notice the difference between what we have called above the ``post-collision'' MB distribution 
and the MB distribution for the particle's velocity sampled at random times, in which case the velocity can be both
positive and negative. If $\rho_{\text{\tiny MB}}(v_0)$ is the post-collision density shown in \ref{MB},
then at a random time each velocity $v_0$ should be weighted by the time  the particle, having  this velocity,
takes to go from one end of the interval to the other, which is proportional to $1/v_0$. This term cancels out the
factor $v_0$ in $\rho_{\text{\tiny MB}}(v_0)$, yielding  the standard one-dimensional MB-distribution \ref{MBrt}.

We point out for later use the   model  of Figure \ref{multiple_second}
of thermal interaction between gas molecule and wall. 
The  $k$ 
masses on the far left have a very short range  of motion,
limited by 
the first dashed line,
 compared to $m_1$, which is limited by the second dashed line on the right.
 The gas molecule, $m_2$, can move across those lines.
  As discussed above,
when  the  number  of masses constituting the finite ``heat bath''  grows, 
the asymptotic distribution of positions of $m_1$ (under the assumption of ergodicity)   becomes uniform  and
the distribution of velocities of $m_1$ becomes Gaussian. 
The  random billiard thermostat to be  introduced in the next section will 
be abstracted from this deterministic model by eliminating the masses on the left (the ``heat bath'')
and setting the statistical state of  $m_1$ equal to  the asymptotic distribution (of position and velocity)
this mass would have  in  the deterministic system in the limit of very large $k$.

\section{Random billiard models and the billiard thermostat}\label{billiardmodels}
Our thermodynamical systems will be defined as stochastic processes derived from 
billiard systems. The central concept is of a {\em random billiard}, explained below. 
After general definitions and motivations, we introduce a model thermostat, which is
the key component in the construction of our heat engine in the next section. 
\subsection{Random billiards}
A very current and active program in the ergodic theory of hyperbolic (chaotic) systems, in particular
chaotic billiard systems, is dedicated to obtaining probabilistic limit theorems such as the central limit
theorem for the deterministic system. 
This is a  technical area of investigation, which we do not attempt to survey here. Our 
goal is to abstract from the billiard systems plausible random models that
we can more easily study and of which more explicit results can be derived. Thus we now
turn to the topic of {\em random billiards}.

The basic idea is as follows.  Starting with a deterministic billiard system, we take some of its dynamical
variables and assume that they are random variables with a given distribution. As will be seen in the specific
examples, the resulting system is typically expressed as a Markov chain with non-discrete state space. 
The selection of variables and the choice of probability law assumed for them can vary, but
we observe the following procedure: {\em the probability law for a given random variable is taken to
be the asymptotic distribution that that variable assumes in the  deterministic system from which the random
system is derived.}

To illustrate this idea we return to the divided chamber example of Section \ref{divided}.
There are many possibilities for turning the original system into a random system; we first indicate
an extremely coarse model and then show a much more refined one. 
The coarse model, which only serves a didactical purpose and  is not going to be of further use, 
 consists of a two-state Markov chain with state space $\{0,1\}$, where
$0$ stands for the left side chamber and $1$ for the right side one. We set a time unit $\tau$
equal to the mean time a billiard trajectory takes to return to the zone of scatterers and calculate (numerically)
the transition probabilities of moving from one to the other chamber at each return. 
The resulting Markov chain is shown in diagram form in Figure \ref{markovchambers}.

\vspace{.1in}
\begin{figure}[htbp]
\begin{center}
\includegraphics[width=3.0in]{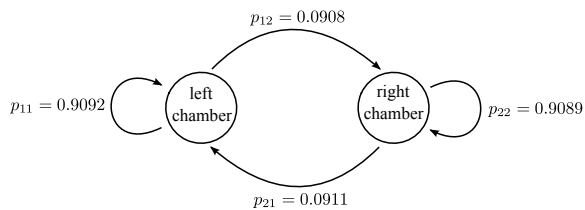}\ \ 
\caption{\small
Numerically obtained transition probabilities between chambers.
Here we use the indices  $1$ for the left chamber and $2$ for the right chamber. }
\label{markovchambers}
\end{center}
\end{figure} 

The dot-dashed   graph of Figure \ref{experiment} shows the result of the experiment of releasing
a large number of particles in one chamber and observing how long it takes for
the distribution of  particles  to even out. The solid line
gives the same distribution for the original deterministic system.

We now turn to the  more refined model (Figure \ref{randombilliard1}) which, as will be seen,  preserves many of the geometric features of the original system.
 The screen of circular scatterers is replaced with a vertical  line.
 Upon colliding with this line,
the billiard particle changes both direction and chamber as prescribed by   transition probabilities  
 with state space $S=\{0,1\}\times [-\pi/2,\pi/2],$ where
the first factor indicates as before the side of the divided container ($0$ for left and  $1$ for  right)
and the second factor gives the angle along which the particle  impinges on or scatters off    the dividing screen. 
Recall the deterministic map $T$ defined on the reduced phase space
 $\{0,1\}\times[0,1]\times [-\pi/2,\pi/2]$ of the fundamental cell shown in Figure
 \ref{cell}.  
The  velocity and chamber  of the billiard particle immediately after collision with the scattering line 
are then defined to be    random variables  obtained
  from  $T$  and the   pre-collision side and angle variables by letting 
the position   $r\in [0,1]$ be 
  random,  uniformly distributed over the unit interval.

To obtain the  transition probabilities operator, we
refer back to   the notation set  in Figure \ref{cell}. We wish to describe the
transition probabilities kernel on  $S$ as a family of 
probability measures $\mu_{k, \theta}$ indexed by the elements of $S$. If $f$ is
any bounded measurable function on  $S$, then by definition the conditional expectation
of $f$ evaluated on the post-collision state, given the pre-collision state $(k_-,\theta_-)$ is 
$$(Pf)(k_-,\theta_-):=\int_{S}f(k_+,\theta_+)\, d\mu_{k_-,\theta_-}(k_+,\theta_+):=\int_0^1f(\varphi_T(k_-, r, \theta_-))\, dr$$
where $T=T(k_-, r,\theta_-)$ is the return time to the entry of the fundamental cell (the dashed lines of
Figure \ref{cell}), $r\in [0,1]$ is the position coordinate along either of the entry line segments, and
$\varphi_t$ is the billiard flow given in terms of the unit velocity angle rather  than the velocity vector.

\vspace{0.1in}
\begin{figure}[htbp]
\begin{center}
\includegraphics[width=2.2in]{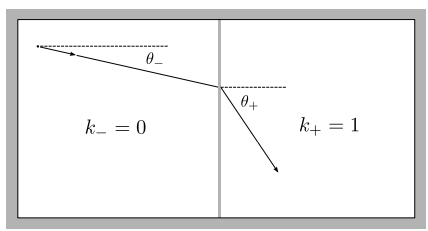}\ \ 
\caption{{\small A  random billiard model for the divided container  experiment. 
The screen of circular scatterers is replaced with  a scattering line. }}
\label{randombilliard1}
\end{center}
\end{figure}

Thus, in this model of random billiard we have replaced the screen of scatterers by a line segment 
separating the two chambers and 
a scattering (Markov) operator  $P$ that updates the direction of the velocity at every
collision with that line segment.  It turns out that the operator $P$ has many nice properties. 
First, the measure $\overline{\mu}$ which assigns probability $1/2$ to $k=0,1$ and the
cosine distribution to $\theta$ turns out to be the unique stationary distribution for $P$.
Second, $P$ can be defined on the Hilbert space of square-integrable functions on $S$
with the measure $\mu$, where it is a self-adjoint operator of norm $1$. 
We refer to   \cite{scott,fz,fz2} for more information about similar operators and their spectral theory.

\begin{figure}[htbp]
\begin{center}
\includegraphics[width=2.5in]{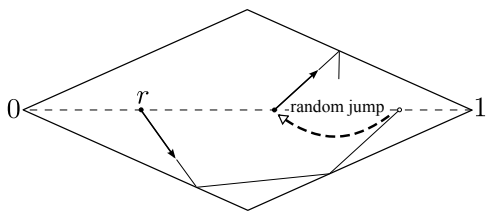}\ \ 
\caption{\small  In this model of a random billiard, the particle trajectory follows ordinary billiard motion until
it crosses the dashed line, at which moment it jumps to a random point along that line, keeping its 
velocity unchanged. The distribution of random position along the dashed line at each crossing is uniform. }
\label{trianglerandom}
\end{center}
\end{figure}

A possible point of concern is that we have illustrated the use of the Markov operator, and claimed that
the cosine law is stationary for it, for a dispersing (Sinai-type) billiard, whereas most of the billiard models
in this paper   are going  to be polygonal or polyhedral, for which ergodicity is
hard to ascertain.  With this in mind, we conclude this section with a much simpler but similar example
of a random billiard  on a 
parallelogram. 
The details are   in Figure \ref{trianglerandom}.

The velocity of the billiard particle as it emerges from the lower triangle of Figure \ref{trianglerandom} through the dashed line is a function of
the velocity as it comes into the triangle and the position $r$. By making $r$ a random variable, the outgoing velocity
becomes a random function of the velocity coming in. We can again describe this velocity response by an operator
$P$ very similar to the one of the previous example (except that the variable $k$ is not present here).
As before, the cosine distribution of exit angles is stationary for the process $V_1, V_2, \dots$, where
$V_i$ is the unit velocity at the $i$th exit from the lower triangle. See  Figure
\ref{cosinetriangle}.

 \begin{figure}[htbp]
\begin{center}
\includegraphics[width=3in]{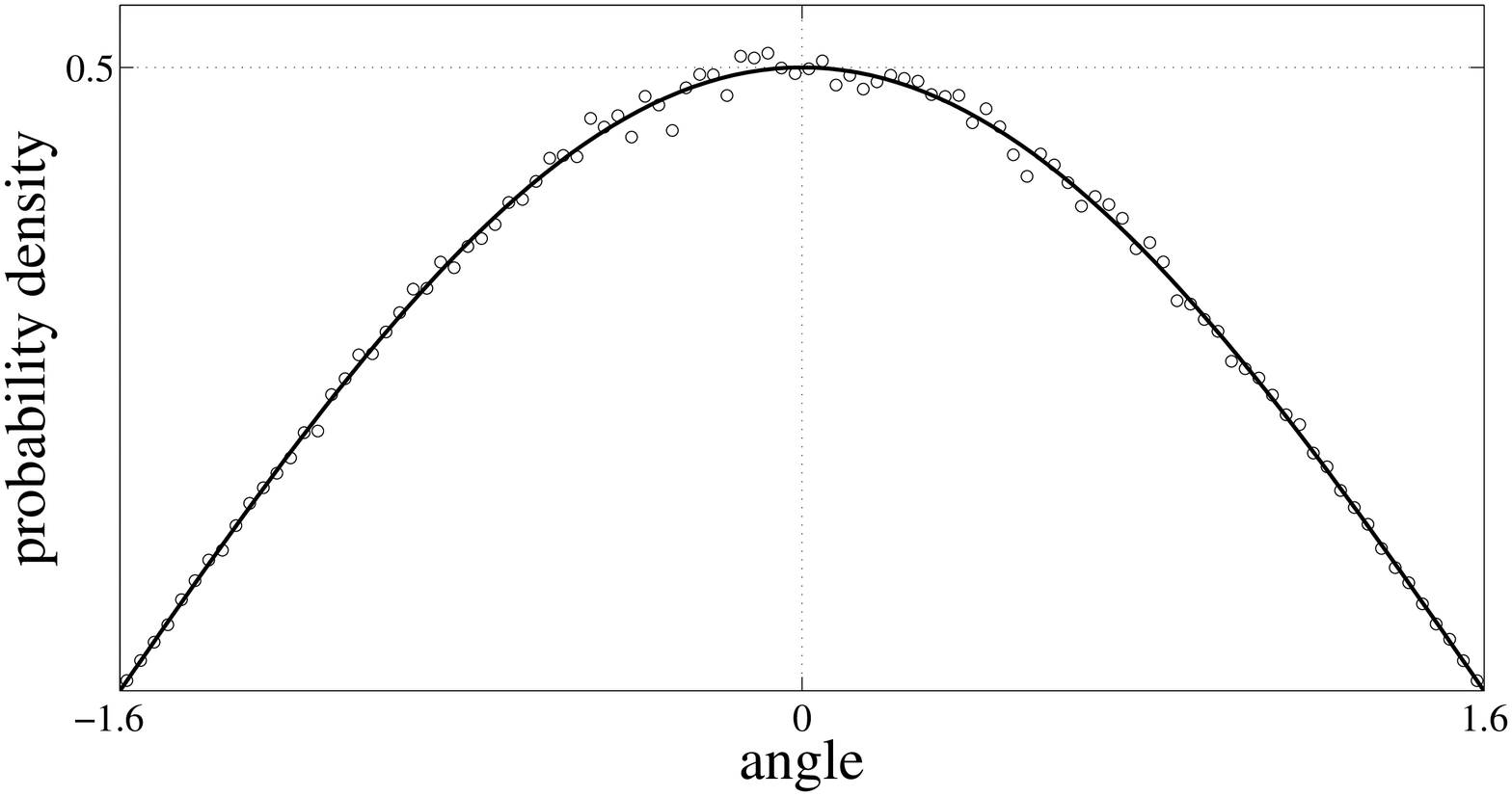}\ \ 
\caption{\small  Histogram of exit velocity angles from the lower  triangle in the random billiard system of Figure
\ref{trianglerandom}, computed for a random orbit with $10^8$ crossings of the dashed line. Convergence is much
slower than for random billiards derived from dispersing deterministic billiards such as the divided chamber example.}
\label{cosinetriangle}
\end{center}
\end{figure} 

  \subsection{A billiard thermostat}\label{billtherm}
We now introduce a random billiard model that will serve as our all-purpose thermostat at a fixed temperature.  
 The details are explained in   Figure
\ref{onemolthermostat} and 
the random  billiard representation of   the system is described in Figure \ref{simplegas}.

This   one-dimensional random billiard thermostat  is  a random reduction of the deterministic system 
of Figure \ref{multiple_second}; the main idea is to eliminate the many masses that are tethered to the left wall,
keeping $m_1$,  and assuming  that  the position and velocity of $m_1$  just prior to interacting with the  gas molecule
are distributed according to its  asymptotic distribution of position and velocity as part of
the   deterministic system of Figure \ref{multiple1}.

 In \cite{scott} we have  studied  Markov chains associated to this system in great detail, including 
some aspects of the spectral properties of the associated Markov operator $P$.  
The state space is now the half-line $(0,\infty)$ of possible values of the velocity of $m_2$ as
it emerges from the interaction zone $[0,l]$ after each iteration of the collision process. 
To write $P$ explicitly  first define  $\gamma=\sqrt{m_2/m_1}$
and write  $P=P_\gamma$   to keep in mind the dependence of the process on this key parameter. 
Recall  that $P_\gamma$ depends on the choice of 
a Gaussian distribution with mean zero and a  $\sigma^2$, representing
the distribution of velocities of $m_1$. (See Proposition \ref{projectMB}.)

 \vspace{0.1in}
\begin{figure}[htbp]
\begin{center}
\includegraphics[width=4.0in]{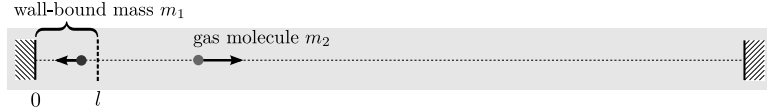}\ \ 
\caption{{\small A  random billiard reduction of  the system of Figure \ref{multiple1} or \ref{multiple_second}.
Mass $m_2$ moves freely over the interval until colliding  with   the wall-bound masses $m_1$.
 We imagine the  latter as    tethered to the wall  by a string of length $l$,
just as  the masses $m_i$ in Figure \ref{multiple1}.
The position of  $m_1$ is assumed to be random uniformly   over   $[0,l]$,
and the velocity is  random normally distributed. At the moment
 $m_2$ crosses the dashed line and thus enters the zone where it can collide with $m_1$,
we choose the state of the latter  (its position and velocity) from its   fixed probability
distributions. From that point on we follow the deterministic motion of the two masses
until $m_2$ leaves $[0,l]$. Prior to every  future collision  
 the statistical state of $m_1$ is reset.}}
\label{onemolthermostat}
\end{center}
\end{figure}

Then  $P_\gamma$ acts on, say, bounded continuous functions on the interval $(0,\infty)$ or, dually,
on probability measures on that interval according to 
 $$(P_\gamma f)(v)=\frac1{l_1}\int_0^{l_1}\int_{-\infty}^{\infty} f(\varphi_\tau(r, (w,v))) \frac{\exp\left(-\frac12 w^2/\sigma^2\right)}{\sigma\sqrt{2\pi}} \, dw\, dr$$
where  the following notation is being used: $\tau=\tau(r, w, v)$ is the return time to the entry (dashed-line) side of
the triangle of Figure \ref{simplegas}  given that the (deterministic) billiard trajectory begins
at   $r\in [0,l_1]$, where $l_1=l\left(m_2/m\right)^{1/2}$, $m=m_1+m_2$, 
$\varphi_\tau$ is the billiard flow stopped at time $\tau$, and $(w,v)$ is the initial velocity of the billiard particle in
dimension $2$. Notice that we are here using the mass-rescaled coordinates  as explained in 
Section \ref{rescale}.

\vspace{.1in}
\begin{figure}[htbp]
\begin{center}
\includegraphics[width=3.0in]{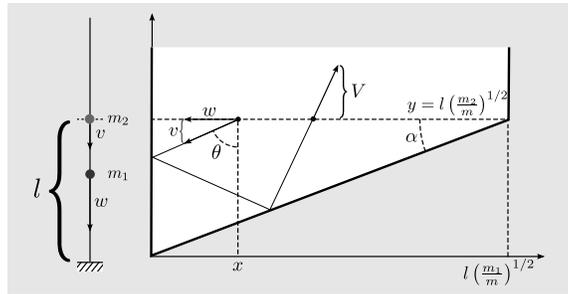}\ \ 
\caption{\small Billiard representation of the system of Figure \ref{onemolthermostat}.
 }
\label{simplegas}
\end{center}
\end{figure}

This amounts to giving  the post-collision velocity
 $V$ of $m_2$ by the following procedure. (See Figure \ref{simplegas}.)
When $m_2$ crosses the line into the zone of free motion of $m_1$,   the horizontal component $w$ of the billiard particle 
is chosen according to 
a Gaussian distribution with mean $0$ and a given variance  $\sigma^2$, and the position along the
upper side of the triangle indicated by a dashed line is chosen to be random uniform. The trajectory  afterwards 
is ordinary, deterministic billiard motion. The outgoing velocity
of $m_2$ is then the vertical component of 
the velocity of the billiard particle as it emerges 
out of the triangle.

  \vspace{0.1in}
 \begin{figure}[htbp]
\begin{center}
\includegraphics[width=3.0in]{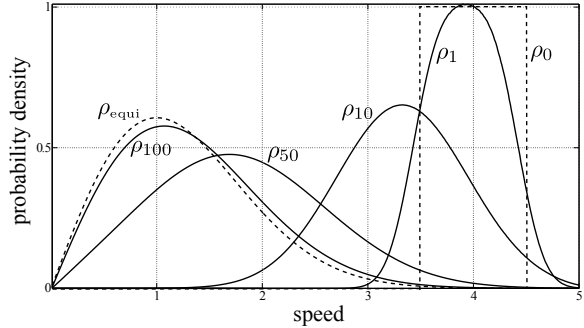}\ \ 
\caption{\small Evolution to  equilibrium of an initial probability measure  for the velocity of the free mass.
(Figure \ref{onemolthermostat}.)
The  $\rho_i$ are the probability densities of the velocities
 $V_i$  of $m_2$  immediately
after each collision with the wall system.
The limit density is the post-collision MB-distribution.  We have used  
a finite rank approximation of $P_\gamma$ obtained by numerically simulating the  system
with parameter $\gamma=0.1$.
 }
\label{MBevolution}
\end{center}
\end{figure}

The basic properties of the billiard-thermostat just defined are listed in the next theorem,
 taken from \cite{scott}. See the cited paper for a  proof. 
The theorem characterizes  the stationary distribution of velocities of the billiard-thermostat
Markov chain and  gives some indication of  how an arbitrary initial distribution  convergences to the unique stationary one. (An estimate
of 
the rate of convergence 
 in terms of the mass ratio 
can also  be found in \cite{scott}.)
  
\begin{theorem}[\cite{scott}]\label{thermtheor}
The following assertions hold for   $\gamma<1/\sqrt{3}$:
\begin{enumerate}
\item   $P_\gamma$ has a unique stationary distribution $\mu$. Its  probability density is
$$ \rho(v)=\sigma^{-1}v\exp\left(-\frac{v^2}{2\sigma^2}\right).$$
\item $P_\gamma$ is a self-adjoint operator of norm $1$ defined on the Hilbert space
of square integrable functions on $(0,\infty)$ with the measure $\mu$. It is  a Hilbert-Schmidt operator
and has, therefore, a discrete spectrum of eigenvalues.
\item  For an arbitrary initial probability distribution $\mu_0$, we have 
$$\|\mu_0 P_\gamma^n-\mu\|_{TV}\rightarrow 0 $$ exponentially fast in the total variation norm.
\end{enumerate}
\end{theorem}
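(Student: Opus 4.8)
The plan is to establish the three assertions in turn, letting the geometry of the underlying deterministic billiard carry the weight of (1) and (2) and the spectral theorem finish (3). For part (1), I would deduce stationarity of $\mu$ from the invariance of the cosine measure recalled in Section~\ref{deterministicbilliards}. Working in the mass-rescaled coordinates of Section~\ref{rescale}, one collision cycle of the thermostat is the first-return map $T$ of the two-dimensional billiard in the wedge of Figure~\ref{simplegas}, from the dashed entry segment back to itself. Along each trajectory the speed $s=\sqrt{w^{2}+v^{2}}$ is conserved and, on each energy shell, $T$ preserves $dr\,\cos\theta\,d\theta$. I would then compute the joint law of the pre-collision data $(r,w,v)$ when $r$ is uniform, $w\sim N(0,\sigma^{2})$, and $v$ has the claimed density $\rho(v)=\sigma^{-1}v\,e^{-v^{2}/2\sigma^{2}}$: writing $w=s\sin\theta$, $v=s\cos\theta$, the law becomes, up to a constant, $dr\cdot s^{2}e^{-s^{2}/2\sigma^{2}}\,ds\cdot\cos\theta\,d\theta$, i.e.\ a radial measure in $s$ times $dr\,\cos\theta\,d\theta$. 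Since $T$ fixes $s$ and preserves $dr\,\cos\theta\,d\theta$, the post-collision data obey the same law; projecting onto $v'=s\cos\theta'$ returns $\rho$. Uniqueness of the stationary distribution I would get from irreducibility — the Gaussian spreads $w$ over all of $\mathbb{R}$, so the one-step transition kernel has a strictly positive density on $(0,\infty)^{2}$ — or read it off afterward from the spectral gap of part (3).

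For part (2), self-adjointness of $P_{\gamma}$ on $L^{2}(\mu)$ is equivalent to the detailed balance relation $p(v,v')\rho(v)=p(v',v)\rho(v')$ for the transition density $p$, and I would obtain this from the time-reversal symmetry of billiard dynamics — an involution of phase space conjugating $T$ to $T^{-1}$ — combined with the evenness $w\mapsto -w$ of the Gaussian weight: composing the two symmetries shows the joint law of a pre/post pair $(v,v')$ drawn from stationarity is invariant under the swap $(v,v')\mapsto(v',v)$. The operator norm equals $1$ because $P_{\gamma}\mathbf{1}=\mathbf{1}$ while every Markov operator is an $L^{2}(\mu)$-contraction for its invariant measure. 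For the Hilbert--Schmidt property one must check that the kernel $\tilde p(v,v')=p(v,v')/\rho(v')$ of $P_{\gamma}$ on $L^{2}(\mu)$ satisfies $\iint \tilde p(v,v')^{2}\,d\mu(v)\,d\mu(v')<\infty$, which by detailed balance equals $\iint p(v,v')p(v',v)\,dv\,dv'$. Here the hypothesis $\gamma<1/\sqrt{3}$ enters decisively: it makes the relevant wedge obtuse enough (opening angle exceeding $2\pi/3$) that a trajectory entering through the dashed line reflects at most twice before leaving, so the collision map $(w,v)\mapsto V_{\gamma}(w,v)$ is piecewise real-analytic with an explicitly controllable Jacobian $\partial V_{\gamma}/\partial w$; pushing the Gaussian forward through this map gives $p(v,v')$ a formula with Gaussian decay as $v'\to\infty$ and integrable behavior as $v'\to 0$, making the double integral finite. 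Being Hilbert--Schmidt, $P_{\gamma}$ is compact, hence has discrete spectrum.

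For part (3) I would assemble the pieces: $P_{\gamma}$ is compact and self-adjoint on $L^{2}(\mu)$ with spectrum contained in $[-1,1]$; the eigenvalue $1$ is simple, with eigenspace the constants (from uniqueness of $\mu$, or from irreducibility); and $-1$ is not an eigenvalue because the chain is aperiodic — the strictly positive one-step density rules out a $2$-cycle. Hence $\lambda:=\sup\!\big(\mathrm{spec}(P_{\gamma})\setminus\{1\}\big)<1$, so for any $\mu_{0}$ with density $h_{0}=d\mu_{0}/d\mu\in L^{2}(\mu)$ the spectral theorem gives $\|\mu_{0}P_{\gamma}^{n}-\mu\|_{TV}\le\tfrac12\|h_{0}-1\|_{L^{2}(\mu)}\,\lambda^{n}$ via Cauchy--Schwarz. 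For an arbitrary initial distribution $\mu_{0}$, apply one step first: the bounded transition density makes $\mu_{0}P_{\gamma}$ absolutely continuous with an $L^{2}(\mu)$ density, to which the previous estimate applies, still yielding exponential decay.

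I expect the Hilbert--Schmidt estimate of part (2) to be the main obstacle. Invariance, reversibility and the spectral theorem are soft; but bounding $\iint p(v,v')p(v',v)\,dv\,dv'$ requires genuine control of the collision map $V_{\gamma}$ and its derivative near the degenerate directions (grazing trajectories, $v\to 0$), and it is precisely the threshold $\gamma<1/\sqrt{3}$ — forcing a uniformly bounded number of reflections inside the wedge — that makes this control available.
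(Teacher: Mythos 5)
First, a point of reference: the paper does not prove Theorem~\ref{thermtheor} at all --- it is imported verbatim from \cite{scott} (``See the cited paper for a proof''), so there is no in-text argument to compare against. Judged on its own terms, your outline follows what is essentially the standard (and, as far as the strategy goes, the actual) route: stationarity of $\rho$ from the invariance of $\cos\theta\,dr\,d\theta$ on each energy shell together with the polar-coordinate factorization of the product of the Gaussian in $w$ and the Rayleigh density in $v$; self-adjointness from time-reversal plus the evenness of the Gaussian; norm $1$ and discreteness of the spectrum from the Markov-contraction and Hilbert--Schmidt properties; and exponential total-variation convergence from the spectral gap. The computation showing that $e^{-w^2/2\sigma^2}\cdot v e^{-v^2/2\sigma^2}\,dw\,dv$ equals a radial factor times $\cos\theta\,d\theta$ is correct and is the heart of part (1).

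There are, however, three concrete soft spots. (i) Your geometric reading of the hypothesis is wrong in detail: $\gamma<1/\sqrt3$ means $\alpha=\arctan\gamma<\pi/6$, i.e.\ the wedge of Figure~\ref{simplegas} is \emph{narrow}, not obtuse, and the explicit map in Subsection~\ref{billtherm} has three affine branches $F_1^w,F_2^w,F_3^w$ over a partition into four intervals --- up to three reflections, not ``at most twice.'' The condition's role is to cap the number of branches at three so that $p(v,\cdot)$ is an explicit finite sum of Gaussian pushforwards; with the wrong count your Jacobian bookkeeping would not match the kernel. (ii) The Hilbert--Schmidt estimate near $v'=0$ is more delicate than ``Gaussian decay plus integrability'': since $d\mu(v')\propto v'e^{-v'^2/2\sigma^2}dv'$ and $\tilde p=p/\rho$, the integrand behaves like $p(v,v')^2/v'$, which is integrable only because $p(v,v')\to0$ as $v'\to0$; this in turn holds because the branch probabilities vanish exactly at the preimages of $v'=0$ (e.g.\ $q_v(w)=0$ precisely when $|w|/v=(1-\gamma^2)/(2\gamma)$, the point where $F_2^{|w|}(v)=0$). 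You flag this as the main obstacle but do not carry it out, and it is the one place where the conclusion could genuinely fail without the stated hypothesis. (iii) In part (3), passing from $L^2(\mu)$ initial densities to arbitrary $\mu_0$ by ``one smoothing step'' requires $\sup_v\|\tilde p(v,\cdot)\|_{L^2(\mu)}<\infty$, which does not follow from the Hilbert--Schmidt property alone; you should either verify this uniform bound from the explicit kernel or replace the last step by a Doeblin/coupling argument. None of these is fatal, but all three need to be filled in before the proof is complete.
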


Figure \ref{MBevolution} illustrates the convergence of a sequence of velocity distributions 
under the billiard-thermostat Markov chain process.

   Reverting   to
the original variables (prior to mass-scaling),
stationarity  for the  process of a sequence of  successive collisions between $m_2$ and
the wall system (containing  the mass $m_1$)  implies: (1)  the particle follows a MB distribution and (2) 
$m_1\sigma_1^2=m_2\sigma_2^2$ holds, where $\sigma_1^2$ and $\sigma_2^2$ are the variances, respectively,
of the velocity distribution of $m_1$ (fixed)  and of $m_2$; the latter  evolves from some
initial statistical state toward this equilibrium.  In what follows, any reference to a value of {\em temperature} 
of a wall should be understood as a fixed value $T=m_1\sigma_1^2$.


The equilibrium state 
described in Theorem \ref{thermtheor}
is arrived at by iterating a random map on $(0,\infty)$ with transition probabilities operator
$P_\gamma$. We  show  this map   explicitly here since it is used for
the actual simulation of the process.  
Let $\gamma:=\sqrt{m_2/m_1}=\tan \alpha$, where $\alpha$ is the 
angle  indicated on Figure \ref{simplegas}.  Define 
$$a:=\frac{1-\gamma^2}{1+\gamma^2}, \ \ b:=\frac{2\gamma}{1+\gamma^2},\  \ \overline{a}:=\frac{1-6\gamma^2 +\gamma^4}{(1+\gamma^2)^2}, \ \
\overline{b}:=\frac{4\gamma(1-\gamma^2)}{(1+\gamma^2)^2}.$$ 
Define the functions
$$p_v(w):= \frac{\gamma}{\sqrt{1+\gamma^2}}\frac{|w|}{v}, \ \ q_v(w):= \frac{2(1-\gamma^2)}{1+\gamma^2} -\frac{4\gamma}{1+\gamma^2}\frac{|w|}{v},$$
and introduce the partition of $(0,\infty)$ into intervals $I^i_w=|w| I_i$, $i=1,2,3,4$, where
$$I_i:=(\tan((i-1)\alpha),\tan(i\alpha)], i=1,2,3,\ \  I_4:=(\tan(3\alpha),\infty).$$
To simplify the description of the map, let $m_1>3 m_2$ (equivalent to  $\alpha<\pi/6$).
 Choose $w\in \mathbbm{R}$ at random with probability $\zeta$ and define the
affine maps 
$$F^w_1(v):=av+bw,\ \ F^w_2(v):=av - bw, \ \ F^w_3(v):=-\overline{a}v + \overline{b}w. $$
Finally, let $F^w:(0, \infty)\rightarrow (0,\infty)$ be the  piecewise affine random map defined 
on each interval $I^i_w$ of the partition as follows. 
Case I: If $w\geq 0$, then $F^w(v)=F^w_1(v)$.  Case II: If $w<0$, then
\begin{align*}  
\left.F^w\right|_{I^1_w}(v)&:= F^{|w|}_1(v)   \\
 \left.F^w\right|_{I^2_w}(v)&:=\begin{cases} F^{|w|}_1(v)  &  \text{with  probability } p_v(w)\\
F^{|w|}_3(v)  & \text{with  probability  } 1-p_v(w) \end{cases} \\
\left.F^w\right|_{I^3_w}(v)&:= 
\begin{cases}F^{|w|}_1(v) & \text{with  probability  } p_v(w)  \\  F^{|w|}_2(v) & \text{with  probability  } q_v(w)
 \\
F^{|w|}_3(v) & \text{with  probability  } 1- q_v(w)-p_v(w) 
\end{cases}  \\
\left.F^w\right|_{I^4_w}(v)&:=
\begin{cases}F^{|w|}_1(v) & \text{with  probability  } p_v(w) \\  F^{|w|}_2(v) & \text{with  probability } 1-p_v(w)
 \end{cases} 
\end{align*}
These   are  obtained  by a  tedious but straightforward work.   A
 ``collision between point mass $m_2$ and a wall with temperature $T$'' will
later be interpreted mathematically as  an iteration of $F$ where
the variance of  $\zeta$ is  $T/m_1$.

\section{Heat flow and the billiard-Markov heat engine} \label{heatflowsection}
Now that  temperature has been introduced  into our billiard-Markov models,
 the plan is to explore    basic ideas in  thermodynamics
aimed to build our minimalistic random billiard model of a heat engine. 
References  made  to a ``wall at temperature $T$'' should be understood in
terms of the billiard thermostat model of Section \ref{billtherm} and the random map $F$ given there.

\subsection{Heat flow}

We first discuss heat transport mediated by collisions.

\vspace{0.1in}
\begin{figure}[htbp]
\begin{center}
\includegraphics[width=3.5in]{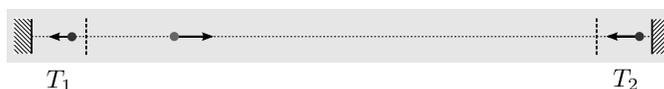}\ \ 
\caption{{\small   Two-sided version of the system  of Figure \ref{onemolthermostat} with
two  different temperatures.
}}
\label{twotempsbill}
\end{center}
\end{figure} 

Consider the experiment described in Figure \ref{twotempsbill}.
 As the wall-bound mass $m_1$  will be  fixed, we may  identify the wall temperature with the variance parameter
  $\sigma^2$ of the velocity distribution of $m_1$.
 The middle particle, of  mass $m_2$,
 will be referred to as the {\em gas molecule}.

 \vspace{0.1in}
\begin{figure}[htbp]
\begin{center}
\includegraphics[width=3.5in]{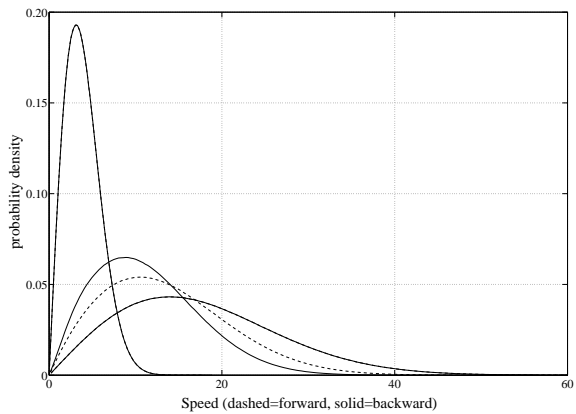}\ \ 
\caption{\small  The figure contains $6$ graphs, only $4$ of which are distinguishable. One pair (the tallest curve) 
gives the probability distributions of the forward and backward velocities 
when the two temperatures are equal and relatively small: $T_1=T_2=\tau_{\text{\tiny cold}}$.
  Similarly, the  shortest pair of graphs
corresponds to equal but relatively high temperature: $T_1=T_2=\tau_{\text{\tiny hot}}$. The two graphs 
in  between show the same distributions when $T_1=\tau_{\text{\tiny hot}}$ and $T_2=\tau_{\text{\tiny cold}}$.
Parameters   used: $m_1=10$, $m_2=1$, the number of iterations (collisions with either side)
was $5\times 10^7$ and the hot and cold temperatures are given by the
variances $\sigma^2_{\text{\tiny hot}}=20$ and   $\sigma^2_{\text{\tiny cold}}=1$.}
\label{experiment1twotemps}
\end{center}
\end{figure}

 We first wish to understand what happens to the stationary velocity  distribution of 
the 
gas molecule. Figure \ref{experiment1twotemps} shows the main effect.
The key observation is that the mean velocity going away from the warmer wall
is greater than the mean velocity moving toward it. This means that
energy is being transferred  from the warmer wall to the colder one through the back and forth motion
of the free mass. 
 The statistical states of the
walls  being  constant, this creates a stationary heat flow between the walls mediated by
the  free  particle.

Let $Q^\text{\tiny hot}_i$ and  $Q^\text{\tiny cold}_i$, for $i=1, 2, \dots$, be
the change in energy of the gas molecule before and after   each collision, alternately  with the hot (say, left) and
cold (right) walls, indexed by the collision number $i$.  Unsurprisingly, it is observed numerically that 
the expected value of  the $Q^\text{\tiny cold}_i$ over a large number of collisions is the negative
of  the expected value of the $Q^\text{\tiny hold}_i$. Furthermore, this expected value, denoted $\overline{Q}^\text{\tiny hot}$,
depends linearly  on the difference of  temperautres:
$$\overline{Q}^\text{\tiny hot}=c(\gamma)\, (T_{\text{\tiny hot}}-T_{\text{\tiny cold}}) $$
where $c(\gamma)$ is a constant which, experimentally,   appears to depend only 
on the main parameter $\gamma$ of the wall-gas molecule system.  
Figure \ref{heatflow} gives some evidence for this  linear  relation.  Each line shows  the
mean energy transferred from the hot wall to the gas molecule for a given
value of $\gamma$. We have set in each case $\sigma^2_{\text{\tiny cold}}=1$
whereas $\sigma^2_{\text{\tiny cold}}$ varied from $1$ to $11$.  
The graphs where virtually the same after shifting both temperatures by an equal  value.

The stage is now  set to try to extract work from this heat flow.  The natural idea
is to take some of the difference in momentum between the forward and backward motion of
the gas molecule and impart it on another mass, which we shall  refer to as the {\em Brownian particle},  to produce 
coherent motion.  
\vspace{0.1in}
\begin{figure}[htbp]
\begin{center}
\includegraphics[width=3in]{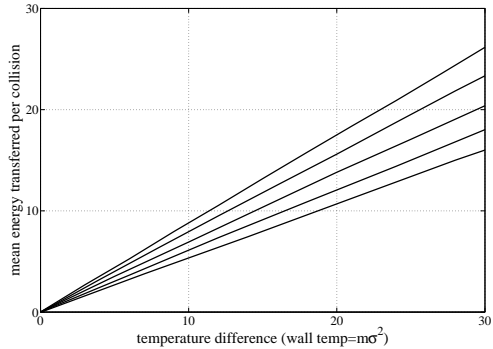}\ \ 
\caption{\small  From top to bottom, $m_2=1,\,  m_1=3.001, \dots, 7.001$.
The mean energy transferred  for each pair of temperatures (expressed by the values of $m_1\sigma^2$) and each value of $\gamma$
was obtained by averaging over  $3\times 10^5$ collisions. 
}
\label{heatflow}
\end{center}
\end{figure} 

\subsection{Description of the  billiard-Markov heat engine}\label{heatengine}
Among the  many possible designs  of  a heat engine built from  
the    billiard-Markov thermostat, we describe here (Figure \ref{motor1})
 the simplest we could devise. 
It consists of two parallel rail tracks, one a short distance above the other. 
The upper track contains a sliding mass $m_2$ (the {\em gas molecule}) and a wall, one side of which is kept at temperature
$T_1$ and the other at temperature $T_2$.  
These  walls can only  exchange energy   indirectly    through 
collisions with the sliding mass.

 \vspace{0.1in}
\begin{figure}[htbp]
\begin{center}
\includegraphics[width=4.0in]{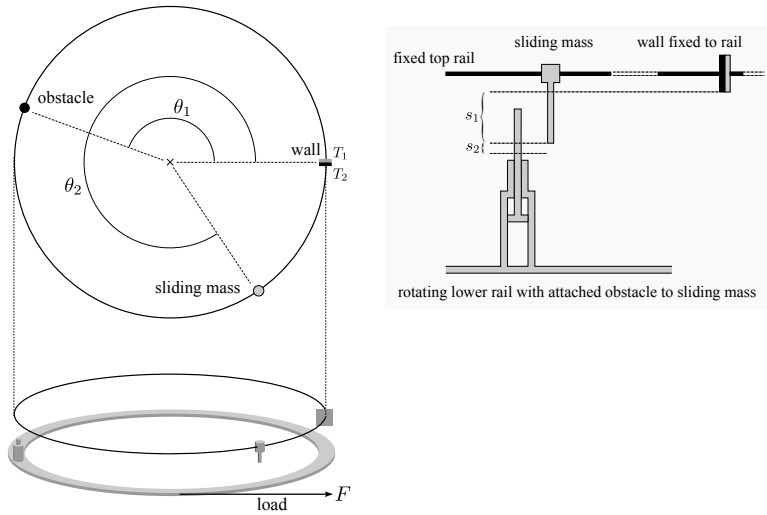}
\caption{\small   A minimalistic  Brownian motor.   }
\label{motor1}
\end{center}
\end{figure} 

The gas molecule moves freely at uniform speeds when
it is away from the wall; when a collision with the wall occurs  we use our model thermostat to obtain the
the post-collision velocity. The lower rail, of mass $m_1$, will be called the {\em Brownian particle.}
 (When running the engine later on, we typically assume   the Brownian mass   to be several times bigger
  than $m_2$.)  It can   rotate freely,  and attached to it is   a protruding  pin  that can move up and down
in billiard fashion; that is, it moves freely within a short vertical interval, bouncing  off elastically against the
limits of the interval.

The maximum height of the pin does not exceed  the lowest point of the wall, so it
never collides with the wall, but it may collide with the gas molecule depending
on how far extended it is.  Therefore, the Brownian particle can at any time be at two possible
states: either ``open'' to the passage of the gas molecule or ``closed'' to it. The times   $\tau_1, \tau_2$
during which  it is closed or open, respectively, alternate periodically as the vertical motion of the pin is assumed
not to be affected by the horizontal motion of the system. These times only depend on the
speed of vertical motion and the lengths $s_1$ and $s_2$ (Figure \ref{motor1}.)

We shall refer to this whole apparatus as the {\em Brownian engine}, or occasionally the {\em billiard-Markov engine}.
The reader will notice some   similarities 
  with the well known {\em Feynman's  ratchet wheel}, although the present design is much
simpler. Alternatively,  the obstacle with a moving
tip can  be regarded as a  piston with an  escape valve  as in an internal combustion engine. 
The contraption may also suggest a distant  relation to  the  Crookes radiometer.
The billiard representation of the Brownian engine is shown in Figure \ref{brownengine1}.

\vspace{0.1in}
\begin{figure}[htbp]
\begin{center}
\includegraphics[width=3in]{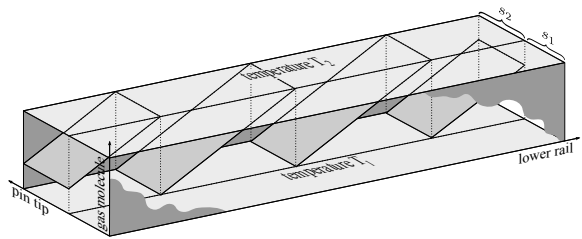}\ \ 
\caption{\small  The $3$-dimensional billiard channel associated to the Brownian engine of Figure \ref{motor1}.
The bottom and top walls reflect at temperatures $T_1$ and $T_2$. This means that the    two components 
of the velocity tangent to those walls are kept the same, while the normal component
is prescribed by the thermostat's random map (see end of Subsection \ref{billtherm}).
All the other walls reflect specularly
(after appropriately rescaling the position coordinates as explained in the text).
We are interested in the projection of the motion along the axis labeled ``lower rail.'' }
\label{brownengine1}
\end{center}
\end{figure}

The whole system contains $5$ moving parts: the gas molecule, the Brownian particle,
the moving tip of the obstacle, and one particle bound to each side of the wall. Thus
$5$
dimensions are required 
for a full description of the random billiard system, but by not showing the billiard structure of the thermostats
we can present it  in dimension $3$.  The variable of special interest is the long axis labeled as ``lower rail''
giving the rotation of the Brownian particle. When later testing the engine  we will want to
add a constant  force $F$ tangential to the  rail so as to investigate the engine's  ability   to
do work (i.e., rotate) agains this force.

Figure \ref{trajectory1} shows a short segment of trajectory. 
 It is apparent  that collisions with the top and bottom sides are not specular and may
not preserve the particle's speed. Collisions with the diagonal sides, when they
occur, are specular.

\vspace{0.1in}
\begin{figure}[htbp]
\begin{center}
\includegraphics[width=4in]{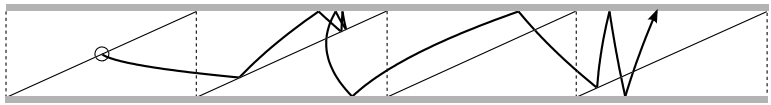}\ \ 
\caption{\small  
Two-dimensional projection of a small segment of trajectory of the Brownian motor with a force load, obtained
by numerical simulation. The circle indicates the
beginning. It is apparent from the curvature
of trajectory segments that the force is acting towards the right-hand side. Distances are rescaled by the masses, so reflections with the diagonal walls are specular (when such reflections occur). }

\label{trajectory1}
\end{center}
\end{figure}

\subsection{The engine's operation; first law  and efficiency }

The typical behavior of the engine, first with $0$ load, is  shown in Figure \ref{brownianpaths}.

\vspace{0.1in}
\begin{figure}[htbp]
\begin{center}
\includegraphics[width=2.9in]{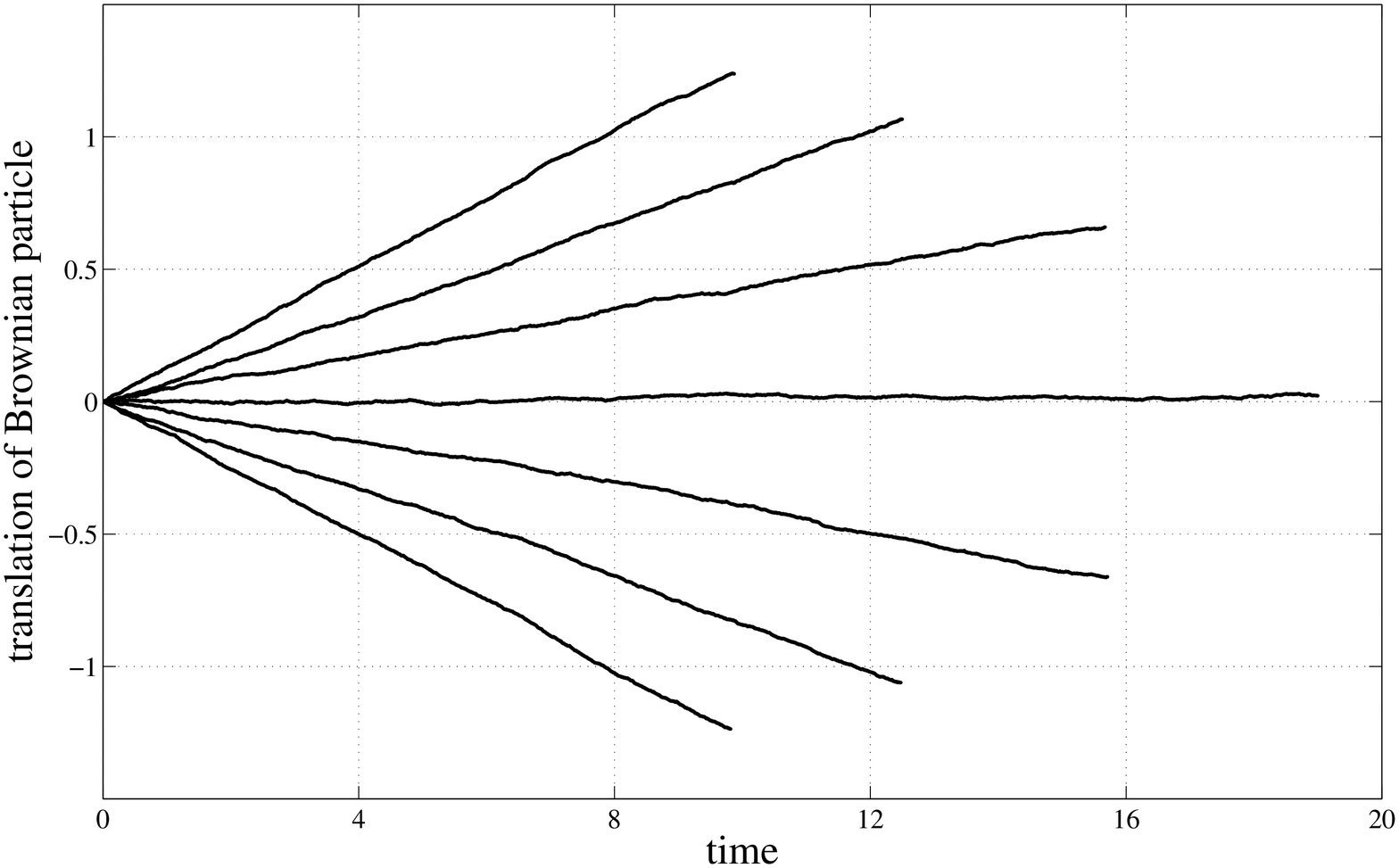} \hspace{-0.2in} \includegraphics[width=1.83in]{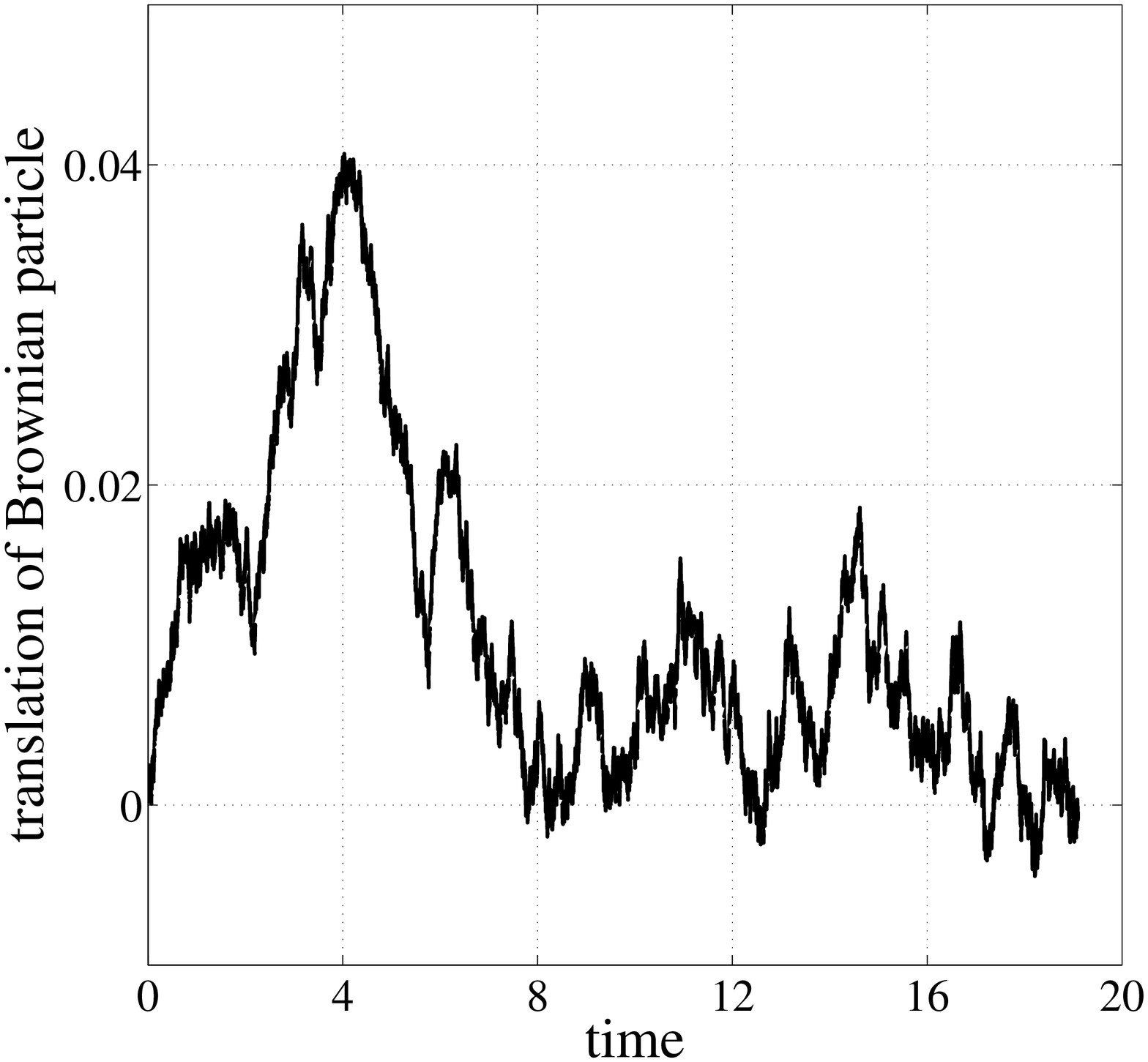}
\caption{\small  Left: position of the Brownian particle (with  zero load), as a function of time. Parameters:
the mass $m_0$ of the thermostat wall system is $10$; the Brownian particle mass is $m_1=100$ and
the gas molecule mass is  $m_2=1$. The  length of the circular rail track is $l=10^{-4}$ (the vertical axis measures the
positive or negative translation along the track) and the number of  events (an {\em event} being  defined as a collision 
between the two particles, a collision between the gas
molecule and  one of the walls, or simply the passage of the  two particles through a common position 
along the tracks without collision due to the obstacle's pin being down) is $N=10^6$.  The temperature parameters
are, from the middle graph to the top: $\sigma_1^2=1$ and $\sigma_2^2=1, 2, 4, 8$. For the lower graphs
the two parameters are reversed. A steady translation away from the hot wall and toward the cold wall is
apparent. 
On the right: another sample path obtained under the same conditions as the middle graph on the left.
 In particular, the two walls have the same temperature and there is no apparent rotation drift.
}
\label{brownianpaths}
\end{center}
\end{figure} 

These graphs suggest that the mass $m_1$ undergoes a noisy rotation, with speed of rotation that
depends on the difference in temperature between the walls.  When
the temperatures are switched, the direction of rotation is reversed.

When the two temperatures are equal, the Brownian particle appears to move  according
to mathematical Brownian motion.   See the right hand side of Figure \ref{brownianpaths}, which  shows
another sample path obtained under the same conditions as  the middle graph on the left hand side of the same figure.
 Viewed at this scale, the Brownian character of the motion is 
 more apparent.

The  effect of adding a constant force is shown in Figure \ref{meanvelocitiesload},
giving the mean velocity of rotation for a constant load
 while the temperature of one of the walls is changed. 
For relatively small temperature differences,
the Brownian particle rotates with constant mean velocity in the same direction of the force,
so
the work is done on the system.
When the temperature difference is sufficiently large the engine rotates against the force, so that work is done by the system.

The  efficiency of a heat  engine is traditionally defined, since Sadi Carnot's pioneering work,
as the (negative of the) ratio of the amount of mechanical work done by the system over the
heat taken from the heat source.  
The analysis of efficiency is based on a simple energy accounting.
At any given time $t>0$, let $Q_h(t)$ be the total amount of heat transferred to the system (gas-molecule plus Brownian particle)
since time $t=0$ due to collisions between the gas molecule and the hot wall. Let  $Q_c(t)$ be the heat similarly transferred 
to the system from the cold wall. These heats are obtained  by adding up the changes in kinetic energy of the gas molecule before
and after each collision.

\vspace{0.1in}
\begin{figure}[htbp]
\begin{center}
\includegraphics[width=3in]{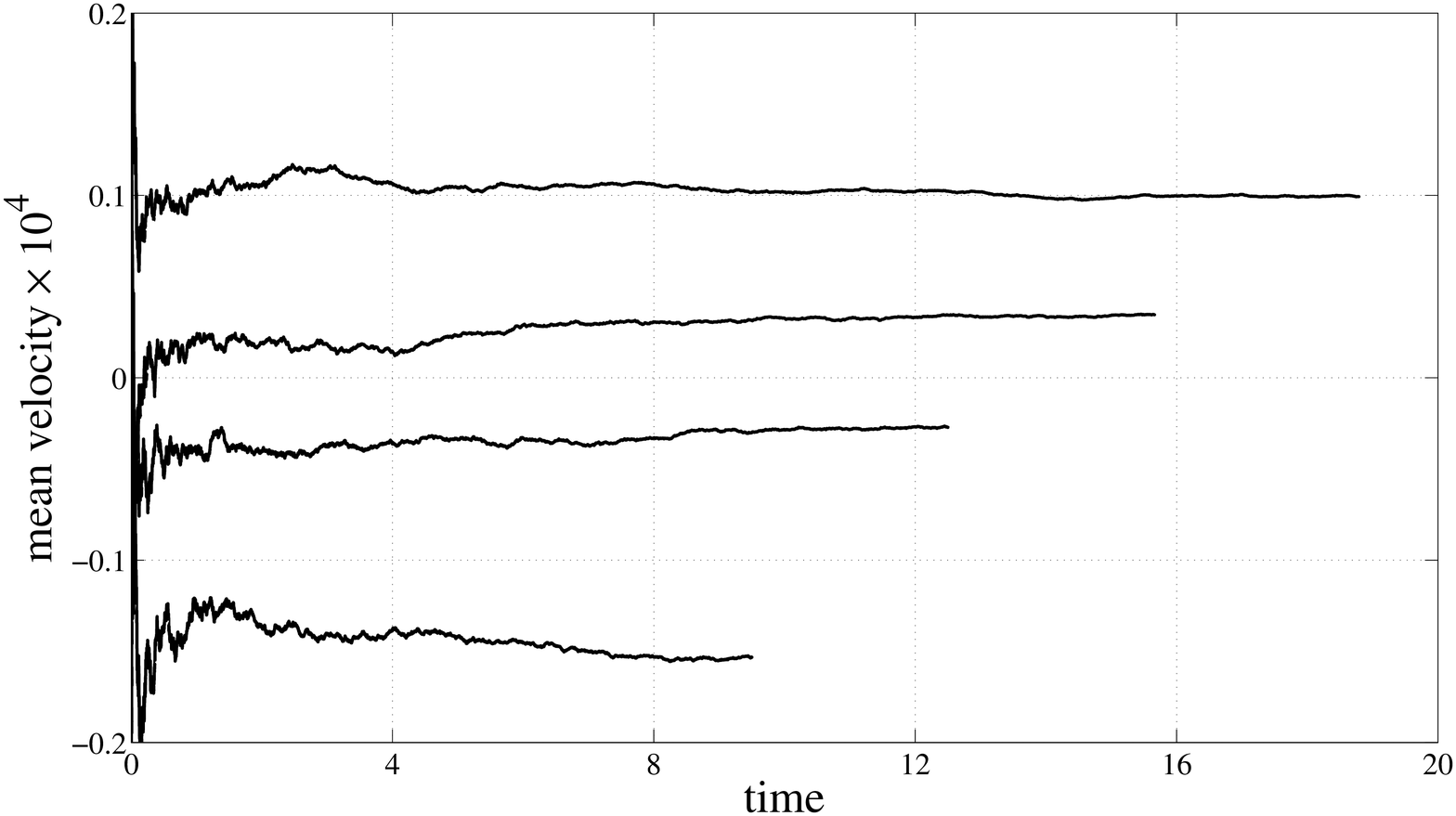}\ \ 
\caption{\small The working engine with a force load. The graphs show the mean velocity of 
the Brownian particle as a function of time. Common parameters for all graphs: 
The values of the masses $m_0, m_1, m_2$ are as in Figure \ref{brownianpaths}; 
the force load is $F=1$; the number of events (as explained in Figure \ref{brownianpaths}) is $N=10^6$;
the length of the track is $l=10^{-4}$; the temperature parameters, from top to bottom, are:
$\sigma_1^2=1$ and $\sigma_2^2=1, 2, 4, 8$.  }
\label{meanvelocitiesload}
\end{center}
\end{figure}

The (internal) energy of the system at time $t$ is $E_g(t)+E_b(t)$, where
$E_g(t)$ is the kinetic energy of the gas molecule and $E_b(t)$ is the kinetic energy of the Brownian particle.
The work done by a force $F$ as in Figure \ref{motor1} on the system up to time $t$    is denoted by $W(t)$. When $W(t)$ is negative,
we say that  work is done {\em by the system}. Recall that 
the   $W(t)=(x_b(t)-x_b(0))F$ for a constant $F$,  where $x_b(t)$ is the position of
the Brownian particle at time $t$. Over a time interval without collisions,   the change in $W$  equals  the change in kinetic energy
of the Brownian particle. Then the following identity holds:
\begin{equation}\label{balance} Q_h(t)+Q_c(t)+W(t)=E_g(t)-E_g(0)+E_b(t)-E_b(0).\end{equation} 
Now formally define the   (mean, at time $t$) {\em efficiency}  over one  sample history  of the engine, when work $W$ is negative  hence done {\em by} the system, as
\begin{equation}\label{effW}\epsilon_t(T_h,T_c)= -\frac{W(t)}{Q_h(t)},\end{equation}
which measures the fraction of heat transferred to the system from the hot wall that is converted to mechanical work over
the course of one history of the engine and is, therefore, a random variable.

Experimentally, we observe by running our Brownian engine that the quotient $(E_g(t)-E_g(0)+E_b(t)-E_b(0))/Q_h(t)$
goes to zero relatively quickly when the two temperatures are different. This is illustrated
in Figure \ref{efficiencydefinition}.

\vspace{0.1in}
\begin{figure}[htbp]
\begin{center}
\hspace{-0.1in}\includegraphics[width=2.5in]{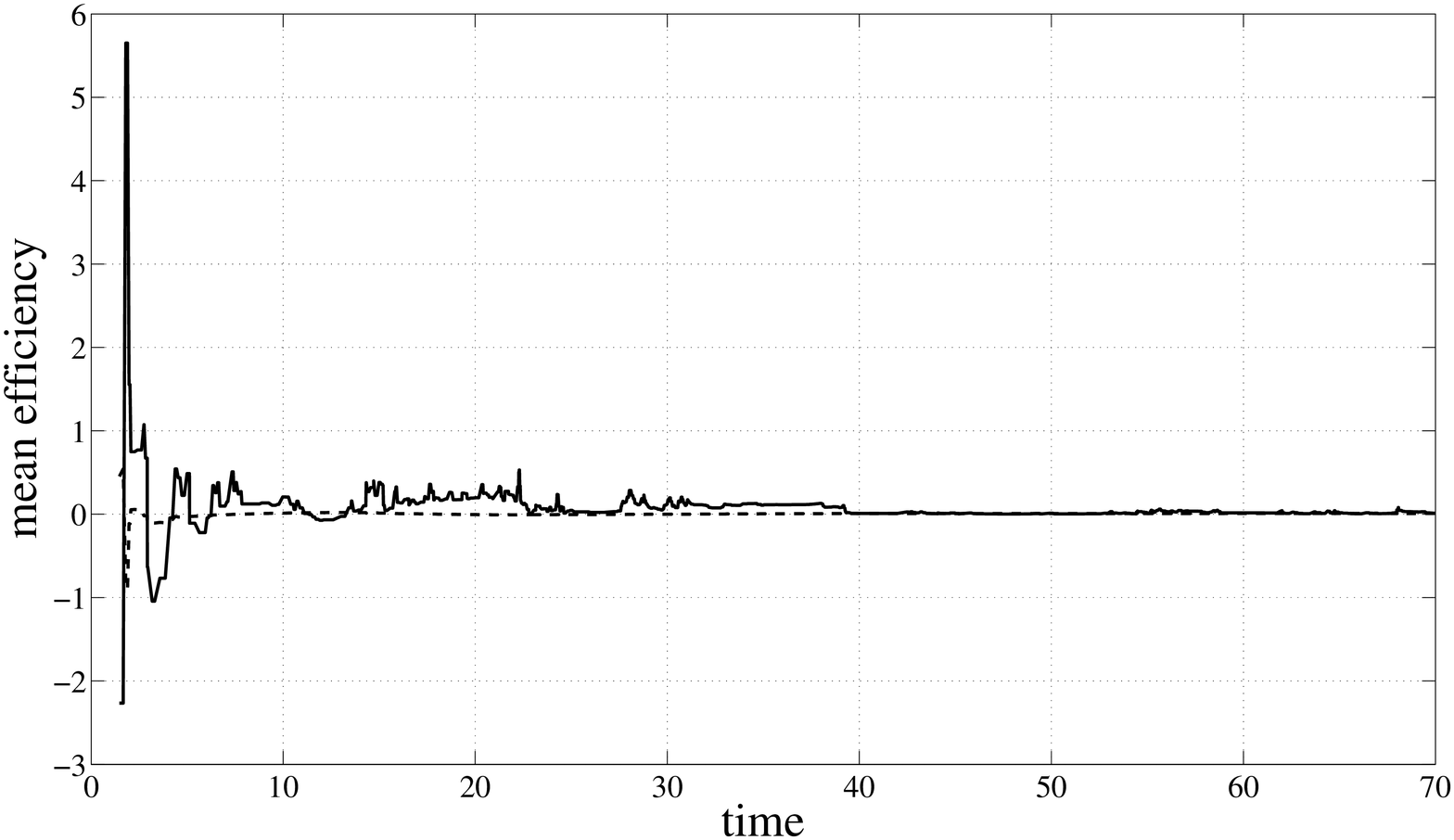}\hspace{-0.2in} \includegraphics[width=2.5in]{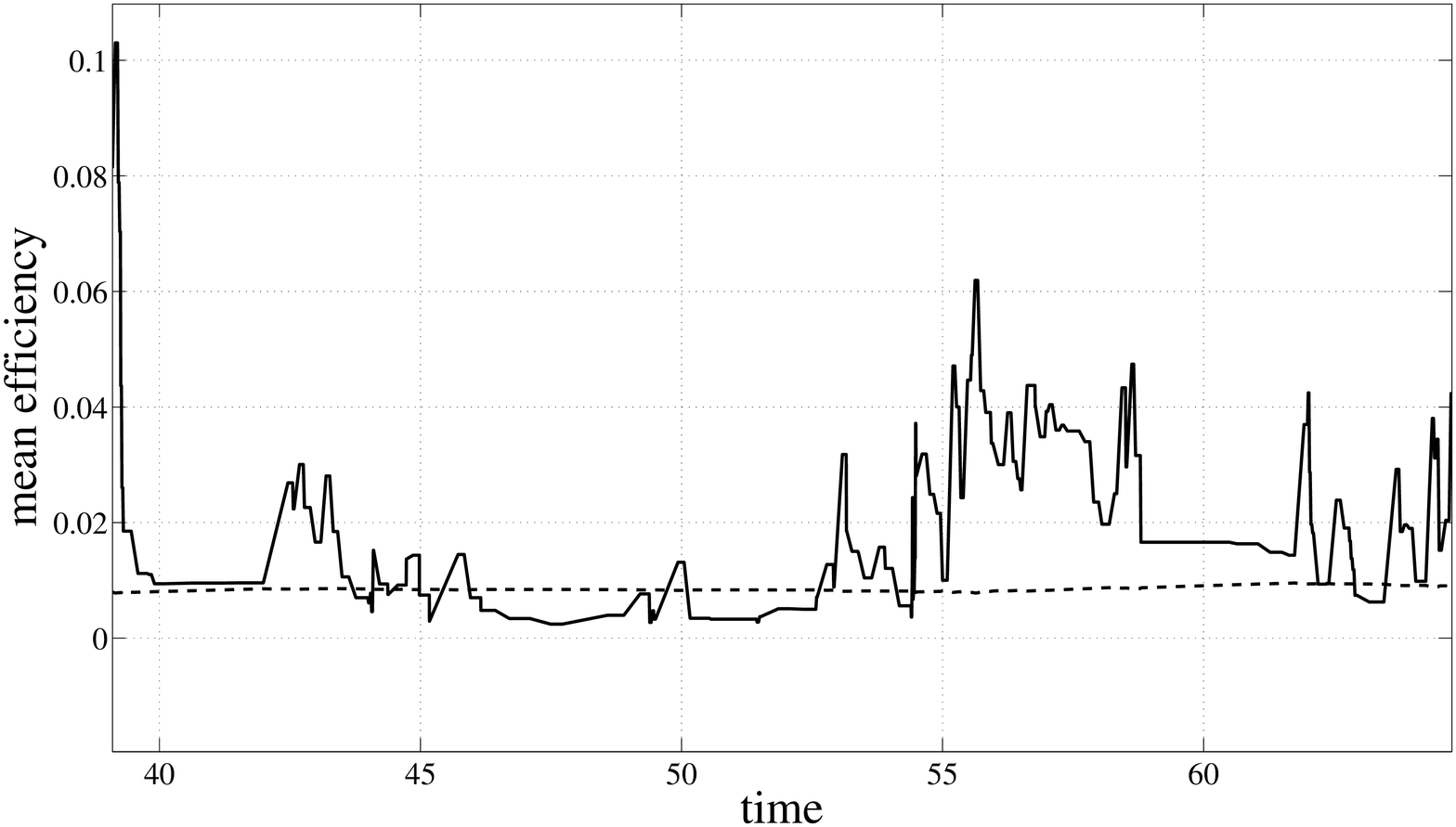}
\caption{\small Comparison of the definition of mean efficiency \ref{effW} (dashed line) and the alternative 
form \ref{effQ} (solid line).  We have applied a force $F=1$, the same masses as in Figure \ref{brownianpaths},
and temperature parameters $\sigma_1^2=1$, $\sigma_2^2=8$.  The graph shows a short run of $1000$ events. 
On the right, zooming in on part of the graph on the left shows that the efficiency is small but not zero.}
\label{efficiencydefinition}
\end{center}
\end{figure}

The efficiency measured at a steady operation regime may be expected
to equal (almost surely for large $t$) the alternative expression
\begin{equation}\label{effQ} \overline{\epsilon}_t(T_h,T_c)=1+\frac{Q_c(t)}{Q_h(t)}\end{equation}
where the two heats have opposite signs.

\vspace{0.1in}
\begin{figure}[htbp]
\begin{center}
\includegraphics[width=3.8in]{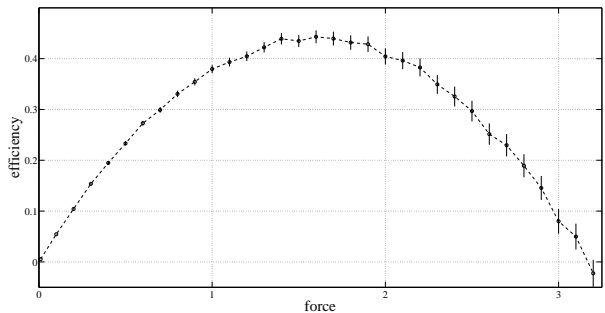}\ \ 
\caption{\small   The efficiency axis is in percentage units, so maximum efficiency
is a little below $0.4\%$. The vertical bars indicate $99\%$ confidence intervals. The parameters here are:
$m_0=10, m_1=100, m_2=1$,  $\sigma_1^2=1, \sigma_2^2=8$.   For each value of the
force we have evaluated the efficiency over $40000$ runs of the engine, each run of length  $2000$ elementary events.
(The dashed line connecting the mean values is there as a visual aid and has no significance.)}
\label{efficiencygraph}
\end{center}
\end{figure}

Compared to the classical upper limit of efficiency $1-T_c/T_h$  derived from
the second law of thermodynamics (for non-stochastic systems), our engine has very low efficiency. 
(See Figure \ref{efficiencygraph}.)
The engine can operate in the reverse direction: 
for a range of values of the force $F$ and the temperatures $T_c$ and $T_h$, 
work is positive (done to the system), with the effect of transferring heat from the
cold to the hot wall. In this regime, the engine operates as a {\em heat pump}.

We offer these informal numerical observations simply as evidence that  the engine functions  as
expected. A model of how a  detailed analysis of its operation  may be done  centered on the idea of entropy production  is 
the stochastic thermodynamic framework of \cite{seifert1}. The stochastic dynamic
of our engine is given by a Markov chain, so the first step in the analysis should
be to describe the process in terms of  Langevin equations by an appropriate scaling limit,
or pursue more directly the type of analysis of  \cite{qian}. These are tasks to be carried out in  a future paper.












\end{document}